\DeclareMathOperator{\block}{\boxdot}
\newcommand{\N}{\mathbb{N}}
\newcommand{\Ahat}{\hat{A}}
\newcommand{\B}{\mathcal{B}}
\newcommand{\gammaU}{\gamma_u}
\newcommand{\gammaV}{\gamma_v}
\newcommand{\VMon}{\mathbf{V}}
\newcommand{\Reg}{\mathbf{Reg}}
\newcommand{\MorphV}{\text{Hom}(A^*,\textbf{V})}
\newcommand{\pSub}{\pi_2}
\newcommand{\contPos}{\pi_c}
\newcommand{\lengthPres}{\lambda}
\newcommand{\cond}[1]{(#1)}
\DeclareMathOperator{\modp}{\mathrm{mod}_p}
\newcommand{\w}{w}
\newcommand{\s}{s}
\newcommand{\p}{p}
\newcommand{\ACzero}{\mathbf{AC}^0}
\newcommand{\NCzero}{\mathbf{NC}^0}
\newcommand{\ACCzero}{\mathbf{ACC}^0}
\newcommand{\CCzero}{\mathbf{CC}^0}
\newcommand{\ACp}[1]{\mathbf{ACC}^0[#1]}
\newcommand{\Parb}{\mathcal P_{\mathrm arb}}
\newcommand{\Parbbf}{\mathbf P_{\mathrm arb}}
\newcommand{\BlockP}{\mathcal V \block \Parb}
\newcommand{\BlockPA}{(\mathcal V \block \Parb)_{A}}
\newcommand{\CircBase}{\mathbf{B}}
\newcommand{\LangVar}{\mathcal{V}}
\title{Using Duality in Circuit Complexity}
\author{Silke Czarnetzki}
\author{Andreas Krebs}
\affil{Wilhelm-Schickard-Institut, Universit\"at T\"ubingen\\
  Sand 13, 72076 T\"ubingen, Germany\\
  \texttt{\{czarnetz,krebs\}@informatik.uni-tuebingen.de}
}
\begin{document}

\maketitle

\begin{abstract}
We investigate in a method for proving separation results for abstract classes of languages. 
A well established method to characterize varieties of regular languages are identities. We use a recently established generalization of these identities to non-regular languages by Gehrke, Grigorieff, and Pin: so called equations, which are capable of describing arbitrary Boolean algebras of languages.

While the main concern of their result is the existence of these equations, we investigate in a general method that could allow to find equations for language classes in an inductive manner. 

Thereto we extend an important tool -- the block product or substitution principle -- known from logic and algebra, to non-regular language classes. Furthermore, we abstract this concept by defining it directly as an operation on (non-regular) language classes. We show that this principle can be used to obtain equations for certain circuit classes, given equations for the gate types.

Concretely, we demonstrate the applicability of this method by obtaining a description via equations for all languages recognized by circuit families that contain a constant number of (inner) gates, given a description of the gate types via equations.
\end{abstract}

\section{Introduction}
In Boolean circuit complexity, deriving lower bounds on circuit size and depth has up to now shown to generally be difficult. While there have been results proving lower bounds, we still lack methods that are applicable in general.
Algebraic methods have improved our understanding of circuit complexity.
Here we are especially interested in the constant depth circuit complexity classes $\ACzero,\CCzero,$ and $\ACCzero$ that have tight connections to algebra via programs. For instance the class $\ACzero$ is equal to the class of languages recognized by polynomial-length programs over finite aperiodic monoids \cite{DBLP:journals/jacm/BarringtonT88}. 
Using these connections allowed the usage of algebraic methods in circuit complexity \cite{DBLP:conf/icalp/BarringtonT87,DBLP:journals/jcss/BarringtonS95,DBLP:conf/latin/BarringtonS95,DBLP:journals/cc/McKenziePT91,DBLP:journals/cc/Therien94}. For an overview see the book of Straubing \cite{Straubing:1994}.

It is a well known method from algebra to characterize regular language classes by identities and has successfully been applied to describe varieties of regular languages stemming from various logic classes (see for example the book of Pin \cite{Pin12}).
Recently, Gehrke, Grigorieff and Pin generalized the approach to work with non-regular language classes \cite{GGP}.

While many concrete characterizations via identities or equations exist for classes of regular languages (see for example the book of Almeida \cite{Almeida94}), only few concrete examples are known for non-regular classes \cite{GKP14}. 
One of the main difficulties is, that these equations hold for all languages in a circuit class and not only the regular ones, for which we have other manageable descriptions.
Furthermore, the question arises how to achieve an abstract method to obtain equations for circuit classes, instead of calculating them concretely for each class.

As the result of \cite{GGP} shows the existence of equational descriptions for arbitrary Boolean algebras, circuit classes form suitable candidates. However, it is not clear how to obtain these equations ina  constructive way. 
The method presented in the paper allows us to obtain equations for more complex classes of circuits, starting with equations from simple classes.
In this paper, we would like to dare a first step towards an approach to derive circuit lower bounds for abstract classes of circuits. Even though the classes described are fairly simple and seperation results could be proven by using combinatorical arguments, this is the first effort made towards a procedure to compute equations for more general circuit classes.

In order to reach our goal we abstract another powerful technique: the block product or substitution principle \cite{TeTh07}.
The idea of computing the defining equations for a more complex variety constructed by some principle from simpler varieties has been successfully used in the regular case \cite{AlWe98, KrSt13}. While all these previous results rely on regular language classes, we extended it to work on non-regular classes by defining an operation purely on language classes, not relying on monoids or automata, which reflects a decomposition of the computation of the circuit.

As our main contribution, we show that in principle it is possible to systematically construct equations for the block product under certain restrictions.  
To demonstrate that our method can be applied, we concretely compute the equations for languages recognized by constant size circuit families, given equations that describe the gate types allowed in the circuit family.

\subsection*{Organization of the paper}

We organized this paper in a way that all the definitions are introduced along with our demonstration of how to compute the equations for constant size circuit families.

As a first step, circuit classes whose gates are defined by a variety of languages are introduced in Section \ref{sec:circuits}.
In Section \ref{sec:blockproduct}, we define an abstract version of the block product.  Then in Section \ref{sec:duality} we introduce basic definitions and results from Stone duality as far as needed to formulate the main theorem in Section \ref{sec:equations}. We conclude and give hints for further research in Section \ref{sec:conclusion}.
And finally two section to prove soundness and completeness of the equations provided in the main theorem (Sections \ref{sec:soundness} and \ref{sec:completeness}).

\newpage


\section{Preliminaries}\label{pre}

\subsection{Varieties}\label{apx:varieties}

The term variety stems from the denotation of solutions to algebraic equations in the field of algebraic geometry. There is indeed a connection between equational theories and varieties, that will be pointed out later. Furthermore, varieties of languages admit for some useful properties that resemble those of classes definable by logic.

We will shortly explain the connection between varieties of monoids and varieties of languages, as they are directly related and we will use the connection between them. For further details, we refer to the book of Pin \cite{PinBook}.

A \emph{variety of monoids} is a class of monoids $\VMon$ satisfying the following properties
\begin{enumerate}
\item if $M \in \VMon$ and $N$ is a submonoid of $M$, then $N \in \VMon$
\item if $M \in \VMon$ and $N$ is a quotient of $M$, then $N \in \VMon$
\item if $(M_i)_{i \in I}$ is a finite family of monoids, then $\prod_{i \in I} M_i \in \VMon$
\end{enumerate}

A \textit{Boolean algebra} $\B$ over a set $X$ is a collection of subsets $X$, that is closed under finite intersections, finite unions and complement.

A \emph{variety of languages} is a class of regular languages $\mathcal{V}$ such that
\begin{enumerate}
\item for each alphabet $A$, $\mathcal{V}_A$ is a Boolean algebra over $A^*$
\item for each monoid morphism $\varphi : A^* \rightarrow B^*$, the condition $X \in \mathcal{V}_A$ implies $\varphi^{-1}(X) \in \mathcal{V}_B$
\item if $L \in \mathcal{V}_A$ and $a \in A$, then $a^{-1}L := \{w \in A^* \mid aw \in L\} \in \mathcal{V}_A$ and $La^{-1} \in \mathcal{V}_A$
\end{enumerate}

Eilenberg proposed in \cite{Eilenberg:1976:ALM:540244} that there exists a direct connection between varieties of monoids and varieties of languages, known as Eilenberg's variety theorem.

\begin{theorem}
Every variety of finite monoids corresponds directly to a variety of languages and vice versa.
\end{theorem}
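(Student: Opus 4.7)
The plan is to construct mutually inverse mappings between varieties of finite monoids and varieties of languages, using the syntactic monoid as the bridge. The key ingredient I would recall at the outset is that every regular language $L \subseteq A^*$ admits a minimal recognizing monoid $M(L)$ via the syntactic congruence, with syntactic morphism $\eta_L : A^* \to M(L)$, and that a monoid $M$ recognizes $L$ if and only if $M(L)$ divides $M$, i.e.\ is a quotient of a submonoid of $M$.

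Given a variety of monoids $\VMon$, I define $\mathcal{V}_A := \{L \subseteq A^* \mid M(L) \in \VMon\}$. To verify the three axioms of a variety of languages I use standard divisibility arguments: Boolean closure follows from the observation that $M(L_1 \cap L_2)$ and $M(L_1 \cup L_2)$ divide $M(L_1) \times M(L_2)$ together with $M(A^* \setminus L) = M(L)$; closure under inverse morphisms $\varphi : B^* \to A^*$ follows from $M(\varphi^{-1}(L))$ dividing $M(L)$; and closure under left/right quotients follows from $M(a^{-1}L)$ and $M(La^{-1})$ dividing $M(L)$. Each such division is immediate from unfolding the definition of the syntactic congruence and checking that the induced map is a well-defined morphism of monoids.

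Conversely, given a variety of languages $\mathcal{V}$, I let $\VMon$ be the smallest variety of monoids containing $M(L)$ for every $L \in \bigcup_A \mathcal{V}_A$. Traversing the loop $\VMon \mapsto \mathcal{V} \mapsto \VMon$ is tautological once one notes that a variety of monoids is generated by the syntactic monoids it contains, so the substance of the theorem lies in showing that $\mathcal{V} \mapsto \VMon \mapsto \mathcal{V}$ is also the identity. One inclusion is immediate from the definitions. The main obstacle, and the heart of the argument, is the converse: any $L \subseteq A^*$ with $M(L) \in \VMon$ must already lie in $\mathcal{V}_A$. Since every member of $\VMon$ divides a finite product $M(L_1) \times \cdots \times M(L_n)$ of syntactic monoids coming from some $L_i \in \mathcal{V}_{B_i}$, the recognizing morphism $A^* \to \prod_i M(L_i)$ factors through morphisms $\psi_i : A^* \to B_i^*$ composed with $\eta_{L_i}$, which reduces the question to showing that each singleton preimage $\eta_{L_i}^{-1}(\{m\})$ belongs to $\mathcal{V}_{B_i}$. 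This in turn I would establish by expressing the syntactic class of a representative word $w$ as the finite Boolean combination $\bigcap_{x,y}\{v : xvy \in L_i \Leftrightarrow xwy \in L_i\}$, where the conjunction ranges over finitely many pairs thanks to finiteness of $M(L_i)$. It is precisely at this final step that all three closure axioms of a variety of languages are genuinely needed: iterated left/right quotients to obtain $x^{-1} L_i y^{-1}$, Boolean closure to combine them, and inverse-morphism closure along $\psi_i$ to lift everything from $B_i^*$ to $A^*$.
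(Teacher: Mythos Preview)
The paper does not actually prove this theorem: it states it as Eilenberg's classical result, cites \cite{Eilenberg:1976:ALM:540244}, and only records the two correspondences informally (languages recognized by monoids in $\VMon$, resp.\ syntactic monoids of languages in $\mathcal{V}$). Your outline is the standard proof of Eilenberg's theorem and is correct; in particular the definitions of the two maps you give match exactly what the paper sketches, so there is nothing to compare beyond noting that you have supplied the argument the paper omits.

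One small remark on presentation: in the step where $M(L)$ divides $\prod_i M(L_i)$, the existence of the morphisms $\psi_i : A^* \to B_i^*$ with $\eta_{L_i}\circ\psi_i$ equal to the $i$-th projection of the lifted morphism $A^* \to \prod_i M(L_i)$ deserves one sentence of justification (freeness of $A^*$ plus surjectivity of each $\eta_{L_i}$), since that is where the alphabets $B_i$ enter and where closure under inverse morphisms is eventually invoked. Otherwise every step you describe goes through as written.
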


That is to say that there are correspondences $\VMon \rightarrow \mathcal{V}$ and $\mathcal{V} \rightarrow \VMon$ that define mutually inverse bijective correspondences between varieties of finite monoids and varieties of languages. For a better intuition, the correspondence $\VMon \rightarrow \mathcal{V}$ is given by the variety of all languages recognisable by monoids of $\VMon$ and $\mathcal{V} \rightarrow \VMon$ is the class of all syntactic monoids of languages of $\mathcal{V}$.

\subsection{Circuits over words}\label{apx:circuits}

\begin{definition}[Boolean function]
A $k$-ary Boolean function $f$ is a map from $\{0,1\}^k$ to $\{0,1\}$. We say a collection of Boolean functions $(f_i)_{i\in\N}$ is a family of Boolean functions if for each $i \in \N$, $f_i$ is an $i$-ary Boolean function.
\end{definition}

We say a Boolean function $f$ is symmetric, if $f(x_0,\ldots,x_{k-1}) = f(x_{\sigma(0)}, \ldots, x_{\sigma(k-1)}$ for each permutation $\sigma$ of the numbers $\{0,\ldots,k-1\}$. Respectively, a family of Boolean functions is called symmetric, if each $f_i$ is.

\begin{definition}[Base]
A base is a set containing Boolean functions and families of Boolean functions.
\end{definition}

\begin{definition}[Circuit]
Let $A$ be a finite alphabet, and $\CircBase$ be a base.
A circuit over the base $\CircBase$ for words in $A^n$ is an acyclic directed graph with a unique node with fan-out 0. Nodes with fan-in 0 are called inputs and are labeled by $1$,$0$,$x_i\in S$, where $S \subseteq A$ and $i=\{0,\dots,n-1\}$.
All other nodes are called gates. Among these the unique node with fan-out 0 is called the output gate. Each gate with fan-in $k$ is labeled by a $k$-ary Boolean function or by a family of Boolean functions of the base $\CircBase$.
\end{definition}

We say that \emph{size} of a circuit is the number of its gates. Recall that inputs were not counted as gates and thus do not affect the size of the circuit. The \emph{depth} of a circuit is the maximal length of a path from an input to its output gate.

The evaluation of a word $w\in A^n$ for a circuit is defined inductively, as usual. The value of the input $0$ is always $0$, and the value of $1$ is $1$. The value the input $x_i$ labeled by $a \in A$ is $1$ if $w_i=a$, and $0$ otherwise.
The value of a gate with $k$ predecessors is its label function $f$ (or respectively $f_k$, if $f$ is a family of Boolean functions) applied to the values of the predecessors. This is well defined as we consider only bases containing symmetric functions (or families).  We say a circuit accepts $w$ if its output gate evaluates to $1$ for the word $w$.  A circuit family recognizes a language $L\subseteq A^*$, if for each $w\in L$, the circuit $C_{|w|}$ accepts $w$.

There are many well known circuit classes of constant depth and polynomial size, some of them listed in Figure \ref{fig:gatetype} (right table).

We show that the constant size circuit families over some base $\CircBase$ can be transformed to a single layer of $\CircBase$ gates with an $\NCzero$ circuit on top.

\begin{definition}[Circuit family]
We say $(C_i)_{i\in\N}$ is a family of circuits over a base $\CircBase$ for words in $A^*$ if there is a finite base $\mathbf F \subseteq\CircBase$ such that $C_i$ is a circuit over the base $\mathbf F$ for words in $A^i$.
The size (resp. depth) of $(C_i)_{i\in\N}$ is a map that maps $i$ to the size (resp. depth) of $C_i$.
\end{definition}

Please note that, as opposed to other common definitions, our definition does not require the base to be finite, but we limit each circuit family to use only a finite number of elements from the base.
This allows to define circuit classes like $\ACCzero$ directly over an infinite base instead of defining $\ACCzero$ as a union of $\ACp p$ each defined over a finite base.

For example the circuit class known as $\NCzero$ consists of constant size circuit families over the base  $\{\land_2,\lor_2,\lnot_1\}$ which consists only of Boolean functions and hence the fan-in of gates over such a base is bounded. As we are interested in constant size circuits that should access all inputs we consider bases that contain families of Boolean functions. The circuit class $\ACzero$ consists of polynomial size, constant depth circuit families over the base $\{\land=(\land_i)_{i\in\N},\lor=(\lor_i)_{i\in\N},\lnot_1\}$, where the base contains both families of Boolean functions and Boolean functions.

\subsection{Topology and the free profinite monoid}\label{apx:topology}

The notion of limit points allows us for equational characterizations of classes of languages, where explicit equations between words are not sufficient. Here, topology provides us with suitable tools to obtain these characterizations, as it allows for the definition of limits.

Let $\Omega$ be a set. A \textit{topology} $\mathcal{T}$ on $\Omega$ is a set of subsets of $\Omega$ with the following properties
\begin{enumerate}
\item $\emptyset, \Omega \in \mathcal{T}$
\item if $A_i \in \Omega$ for all $i$ in some index set $I$, then $\bigcup_{i \in I} A_i \in \mathcal{T}$ \\ ($\mathcal{T}$ is closed under arbitrary unions)
\item if $A_i \in \Omega$ for $i = 1,\ldots,n$ and $n \in \N$, then $\bigcap_{i=1}^n A_i \in \mathcal{T}$ \\ ($\mathcal{T}$ is closed under finite unions)
\end{enumerate}

A set $\Omega$ together with a topology $\mathcal{T}$ is called a topological space and denoted by $(\Omega,\mathcal{T})$.
The elements of $\mathcal{T}$ are called \textit{open sets}, whereas the complements of open sets are called \textit{closed sets}. Sets that are both open and closed are called \textit{clopen}. Furthermore, the \textit{closure} of a set $A$, denoted by $\overline{A}$ is the smallest closed set, that contains $A$.

We say a topological space $(\Omega,\mathcal{T})$ is \emph{hausdorff}, if for each two points $x,y \in \Omega$ there exist two disjoint open sets $O_x,O_y \in \mathcal{T}$ such that $x \in O_x$ and $y \in O_y$.

The clopen sets are of special interest to us. As seen in the introduction, we want to be able to show, that a language does not belong to some class of languages, by showing that it has non-empty intersection with its complement in some new space, when we add certain points. The closed sets allow for such a property, namely they are the only sets $A \subseteq \Omega$ such that $\overline{A} \cap \overline{A^c} = \emptyset$ holds. Thus, a relation between classes of languages and clopen sets of topological spaces is interesting and we will see, that such a connection indeed exists under some premises.

For example, the \emph{free profinite monoid} is a topological space, for which the clopens are exactly the closures of regular languages. But in order to define it we need the definition of a \emph{metric}.

A \textit{metric} on a space $\Omega$ is a function $d: \Omega \times \Omega \rightarrow [0,\infty)$ satisfying for all $x,y,z \in \Omega$
\begin{enumerate}
\item $d(x,y) = 0 \Leftrightarrow x = y$ \hfill ($d$ is positive definite)
\item $d(x,y) = d(y,x)$ \hfill ($d$ is symmetric)
\item $d(x,z) \leq d(x,y) + d(y,z)$ \hfill ($d$ satisfies the triangle inequality)
\end{enumerate}

A metric can be seen as a function that assigns a distance to two points.

The topology induced by a set $\mathcal{E} \subseteq \mathcal{P}(\Omega)$ is the smallest topology that contains $\mathcal{E}$. Let $x \in \Omega$ and let $d$ be a metric. The set of open $\epsilon$-balls centred at $x$
$$B_{\epsilon}(x) = \{y \in \Omega \mid d(x,y) < \epsilon\}$$
then induces a topology on $\Omega$. A space together with a topology induced by a metric is called a \textit{metric space} and denoted by $(\Omega,d)$.

Furthermore, every metric space is hausdorff.

We say a sequence $(x_n)_{n \in \N}$ of points in $\Omega$ \emph{converges} to a point $x \in \Omega$ with respect to some topology, if for each open set $O$, that contains $x$, there exists an $N \in \N$ such that $x_n \in O$ for all $n \geq N$. On a set that is equipped with the trivial topology, i.e. only $\emptyset$ and $\Omega$ are open, every sequence converges to every point.

We say a topological space $(\Omega,\mathcal{T})$ is \emph{compact}, if every open cover of $\Omega$ has a finite subcover. Formally that means, that every arbitrary family of open subsets $(U_i)_{i \in I}$ for some index set I with
$$\Omega = \bigcup_{i \in I} U_i$$
has a subfamily $(U_j)_{j \in J}$ where $J \subseteq I$ is finite, that satisfies
$$\Omega = \bigcup_{j \in J} U_j.$$

The following theorem states a useful property of compact spaces

\begin{theorem}
Let $(\Omega,\mathcal{T})$ be a compact topological space. Then every sequence has a converging subsequence.
\end{theorem}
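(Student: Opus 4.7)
The plan is to argue by contradiction, exploiting compactness to force the sequence to accumulate somewhere. Suppose $(x_n)_{n \in \N}$ has no convergent subsequence. The first step is to observe that in the presence of a metric (or more generally in a first-countable space, which is the setting in which this theorem is applied in the paper, namely the free profinite monoid), this assumption implies that for every point $x \in \Omega$ there exists an open neighborhood $U_x$ containing $x_n$ for only finitely many indices $n$. Indeed, if every open neighborhood of $x$ contained $x_n$ for infinitely many $n$, one could inductively pick $n_1 < n_2 < \cdots$ with $x_{n_k} \in B_{1/k}(x)$, giving a subsequence converging to $x$.

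The second step is to apply compactness to the open cover $\{U_x : x \in \Omega\}$, extracting a finite subcover $U_{x_1}, \ldots, U_{x_k}$. Since each $U_{x_i}$ contains only finitely many terms of the sequence, the finite union $\bigcup_{i=1}^{k} U_{x_i} = \Omega$ contains only finitely many terms as well. But every $x_n$ lies in $\Omega$, so this forces the sequence to have only finitely many indices, a contradiction.

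The delicate point, and the step I would highlight as the main obstacle, is the passage in the first step from ``every neighborhood of $x$ meets $(x_n)$ infinitely often'' to ``a subsequence converges to $x$''. This implication is not valid in arbitrary topological spaces, where compactness and sequential compactness diverge, and it relies on the existence of a countable neighborhood basis at each point. In the intended application the relevant space is endowed with the metric $d$ defined in the preceding paragraph, so the balls $B_{1/k}(x)$ provide exactly such a basis and the argument goes through. Once this point is secured, the rest of the proof is just an unpacking of the definitions of open cover and finite subcover.
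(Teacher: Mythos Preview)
The paper states this theorem without proof, so there is no argument to compare against. Your contradiction argument via accumulation points and a finite subcover is the standard one and is correct in the setting you describe.

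You are right to flag the first-countability issue: as literally stated, the theorem is false, since compactness does not imply sequential compactness in arbitrary topological spaces (e.g.\ $\beta\N$ itself, or $\{0,1\}^{[0,1]}$, are compact but not sequentially compact). Your observation that the only use of this theorem in the paper is for the free profinite monoid $\widehat{A^*}$, which is metric and hence first-countable, is exactly the right way to rescue the statement. In that setting the step ``every neighborhood of $x$ meets the sequence infinitely often $\Rightarrow$ some subsequence converges to $x$'' goes through via the nested balls $B_{1/k}(x)$, precisely as you wrote. So your proof is correct for the intended application, and your caveat is a genuine correction to the paper's overly general formulation.
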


A sequence $(x_n)_{n \in \N}$ is called a \textit{Cauchy sequence} with respect to some metric $d$, if for each $\epsilon > 0$ there exists an $N \in \N$ such that for all $n,m \geq N$ we have $d(x_n,x_m) < \epsilon$. Every sequence, that converges to some point in a metric space is a Cauchy sequence, but not every Cauchy sequence needs necessarily converge. 

We say a metric space is \textit{complete}, if every Cauchy sequence converges.
For any metric space, it is possible to define the distance of two Cauchy sequences $x = (x_n)_{n \in \N}$ and $y = (y_n)_{n \in \N}$ by setting $d'(x,y) = \lim_{n \rightarrow \infty} d(x_n,y_n)$. By $[\Omega]$ denote the space of all sequences in $\Omega$. Then $d'$ is not a metric on $[\Omega]$, since $d(x,y) = 0$ does not imply $x = y$, but it defines a metric on the space $[\Omega] /_{\sim}$ where $x \sim y$ if and only if $d'(x,y) = 0$.

The space $([\Omega] /_{\sim}, d')$ is a complete metric space, also called the \textit{completion} of $\Omega$, often denoted by $\widehat{\Omega}$ and has $\Omega$ as a dense subspace, i.e. every element of $\widehat{\Omega}$ is the limit of a sequence in $\Omega$. The elements of $\widehat{\Omega}$ are equivalence classes of Cauchy sequences. Furthermore, $\Omega$ can be embedded into $\widehat{\Omega}$ by mapping an element $x$ to the class of Cauchy sequences, that are eventually constant and equal to $x$.
In a more apprehensible way, we can think of the completion as the set together with all its convergence points.

Together with these definitions, we can define the free profinite monoid over an alphabet $A$. We say that a finite monoid $M$ separates two words $x$ and $y$, if there is a morphism $\varphi: A \rightarrow M$, with $\varphi(x) \neq \varphi(y)$. The function $d: A^* \times A^* \rightarrow [0,\infty)$ defined by
$$d(x,y) = \{ 2^{-\left| M \right|} \mid M \text{ separates $x$ and $y$} \}$$
then is a metric on $A^*$ and thus induces a topology. The topology itself is not of great interest, since it is discrete, i.e. every singleton is open. The \textit{free profinite monoid} $\Ahat$ is the completion of $A^*$ with respect to $d$.

The free monoid $A^*$, as stated before, can be embedded into $\Ahat$ by mapping a word $w$ to the class of Cauchy sequences, that are eventually constant and equal to $w$.

The free profinite monoid entails useful properties. Firstly, it is a compact and hausdorff space and it holds information about the regular languages, as the following theorem states.

\begin{theorem}\cite{Almeida94}
A language $L \subseteq A^*$ is regular if and only if $\overline{L} \subseteq \Ahat$ is clopen.
\end{theorem}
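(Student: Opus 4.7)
The plan is to prove the two implications separately, in each case exploiting the interplay between finite monoids (carrying the discrete topology) and continuous extensions to $\Ahat$.

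For the forward direction, I would start from a regular language $L$, take its syntactic morphism $\varphi : A^* \to M$ to a finite monoid $M$ with $L = \varphi^{-1}(P)$ for some $P \subseteq M$, and argue that $\varphi$ is uniformly continuous with respect to the profinite metric: by the very definition of $d$, whenever $d(x,y) < 2^{-|M|}$ no finite monoid of size at most $|M|$ can separate $x$ and $y$, and in particular $\varphi(x) = \varphi(y)$. Uniform continuity plus the completeness of $M$ (which is trivial since $M$ is finite and discrete) gives a unique continuous extension $\hat{\varphi} : \Ahat \to M$. Since every subset of a finite discrete space is clopen, $\hat{\varphi}^{-1}(P)$ is clopen in $\Ahat$. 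To finish I would identify $\overline{L}$ with $\hat{\varphi}^{-1}(P)$: the inclusion $\overline{L} \subseteq \hat{\varphi}^{-1}(P)$ follows because the right-hand side is closed and contains $L$, while the reverse inclusion uses density of $A^*$ in $\Ahat$ together with the fact that a sequence in $A^*$ converging to $x \in \hat{\varphi}^{-1}(P)$ is eventually mapped into the singleton $\{\hat{\varphi}(x)\} \subseteq P$, so its tail lies in $L$.

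For the converse I would use compactness of $\Ahat$. If $\overline{L}$ is clopen then both $\overline{L}$ and its complement are closed in the compact space $\Ahat$, hence themselves compact. The basic open sets of the profinite topology are precisely the sets of the form $\hat{\varphi}^{-1}(m)$ for some morphism $\varphi$ into a finite monoid $M$ and some $m \in M$, because these are exactly the balls of radius $2^{-|M|}$. Covering $\overline{L}$ and $\Ahat \setminus \overline{L}$ by finitely many such basic opens and then taking the direct product of all the finite monoids involved yields a single morphism $\psi : A^* \to N$ into a finite monoid $N$ whose fibres refine the cover. Then $L$ is a union of $\psi$-classes, so $L = \psi^{-1}(\psi(L))$ is recognised by $N$, hence regular.

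The main obstacle, and the step where the profinite structure really matters, is identifying the basic opens of $\Ahat$ as preimages of singletons under morphisms to finite monoids; once this is in hand, the forward direction is a routine uniform-continuity argument and the backward direction is a standard finite-subcover argument. I would therefore spend most of the writeup carefully justifying the description of the basic neighbourhoods from the metric $d(x,y) = 2^{-|M|}$, and the fact that the continuous extension of a morphism is determined by its values on the dense subset $A^*$.
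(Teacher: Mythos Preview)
The paper does not prove this theorem; it merely cites it from Almeida's book. So there is no ``paper's own proof'' to compare against. Your outline is the standard argument and is correct in substance.

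One small imprecision worth flagging: you write that the basic open sets $\hat\varphi^{-1}(m)$ ``are exactly the balls of radius $2^{-|M|}$''. That is not literally true. The open ball $B_{2^{-|M|}}(x)$ consists of those $y$ that agree with $x$ under \emph{every} morphism into \emph{every} monoid of size at most $|M|$, not just under the single morphism $\varphi$. What is true, and what you actually need, is that the sets $\hat\varphi^{-1}(m)$ form a base for the topology: each such set is open because $\hat\varphi$ is continuous and $M$ is discrete, and conversely each open ball is of this form after passing to the product of all morphisms into all monoids up to the relevant size (there are only finitely many since $A$ is finite). You already say you would spend most of the writeup on this point, so just be careful not to conflate ``preimage under one morphism'' with ``metric ball of a given radius''. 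Once this base description is established, both directions go through exactly as you describe.
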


We say a function $g : X \rightarrow Y$ between two topological space is \emph{continuous} if for every open set $U$ of $Y$, $g^{-1}(U)$ is open in $X$. This translates to the sentence, that a function is continuous if preimages of open sets are again open. Since the preimage is closed under complement and equivalent definition is that preimages of closed sets are closed.
If $X$ and $Y$ are countable, this is equivalent to saying that for each sequence $(x_n)_{n \in \N}$ with $\lim_{n \rightarrow \infty} x_n = x \in X$, the property $\lim_{n \rightarrow \infty} g(x_n) = g(x)$ holds.

We can now define the unique property of $\Ahat$. For each morphism $\varphi: A^* \rightarrow M$ into a finite monoid $M$, there exists a unique continuous extension $\hat{\varphi}: \Ahat \rightarrow M$, with $\hat{\varphi}(a) = \varphi(a)$ for each $a \in A$.


\subsection{Stone duality}\label{apx:duality}

Indeed, the space $\Ahat$ is just a special case of a more general scenario.

Let $\B$ be a boolean algebra over a set $X$. An \textit{ultrafilter} of $\B$ is a non-empty subset $\gamma$ of $\B$ that satisfies
\begin{enumerate}
\item $\emptyset \notin \gamma$,
\item if $L \in \gamma$ and $K \supseteq L$, then $K \in \gamma$, \hfill ($\gamma$ is closed under extension)
\item if $L,K \in \gamma$ then $K \cap L \in \gamma$, \hfill ($\gamma$ is closed under finite intersections)
\item for each $L \in \B$, either $L \in \gamma$ or $L^c \in \gamma$ \hfill (ultrafilter condition)
\end{enumerate}

In topology, ultrafilters are often used as a generalization of sequences to uncountable spaces. Terms, such as convergence, can be defined likewise, such that they agree with the definitions on sequences for countable spaces.
One of the important properties of ultrafilters is that they have to choose. For any partition of elements from the Boolean algebra they have to contain exactly one partition element. Roughly speaking this makes ultrafilter like the atomic properties that a language in the Boolean algebra can have.

\begin{lemma}\label{lem:partitionimappendix}
Let $\mathcal B$ be a Boolean algebra an $\mu \in \mathcal S(\mathcal B)$. Further, let $\mathcal L = \{L_0,\ldots,L_{n-1}\}$ be a finite partition of $A^*$ of languages from $\B$.
Then there exists an $i \in \{0,\ldots,n-1\}$ such that $L_i \in \mu$.
\end{lemma}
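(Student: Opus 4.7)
The plan is to argue by contradiction, invoking only the four ultrafilter axioms listed just above the lemma. Assume toward contradiction that $L_i \notin \mu$ for every $i \in \{0,\ldots,n-1\}$; the goal is to derive $\emptyset \in \mu$, which violates axiom~(1).

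First, I would apply the ultrafilter condition (axiom~(4)) to each language of the partition: since $L_i \in \mathcal{B}$ and $L_i \notin \mu$, we must have $L_i^c \in \mu$ for every $i$. Next, I would apply closure under finite intersections (axiom~(3)), iterated $n-1$ times, to conclude that
\[
\bigcap_{i=0}^{n-1} L_i^c \in \mu.
\]
Finally, since $\{L_0,\ldots,L_{n-1}\}$ is a partition of $A^*$, we have $\bigcup_{i=0}^{n-1} L_i = A^*$, and De Morgan gives $\bigcap_{i=0}^{n-1} L_i^c = \emptyset$. Thus $\emptyset \in \mu$, contradicting axiom~(1). Hence some $L_i$ belongs to $\mu$.

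There is essentially no difficult step here; the only point worth flagging is that the proof uses only the \emph{covering} half of the partition hypothesis (that the $L_i$ exhaust $A^*$), not pairwise disjointness. Pairwise disjointness is of course what guarantees \emph{uniqueness} of the index $i$ with $L_i \in \mu$, but uniqueness is not part of the statement. One might also note that the lemma tacitly needs each $L_i$ to lie in $\mathcal{B}$ in order to invoke axiom~(4); this is ensured by the hypothesis that the partition consists of languages from $\mathcal{B}$.
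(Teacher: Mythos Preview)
Your proof is correct and follows essentially the same route as the paper: assume no $L_i$ lies in $\mu$, use axiom~(4) to get all $L_i^c \in \mu$, then intersect. The only cosmetic difference is that the paper intersects all but one of the complements to obtain $L_j \in \mu$ (which uses disjointness as well as covering), whereas you intersect all of them to reach $\emptyset \in \mu$; your variant is slightly cleaner and, as you observe, needs only the covering hypothesis.
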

\begin{proof}
Since $\mathcal L$ is a partition of $A^*$, we have that
$$ \bigcup_{i = 0}^{n-1} L_i \in \mu .$$
Suppose that for each $i \in \{0,\ldots,n-1\}$ the condition $L_i \notin \mu$ holds. Since $\mu$ is an ultrafilter, this implies that for all $i \in \{0,\ldots,n-1\}$, $L_i^c \in \mu$.
Observe that
$$ L_j = \bigcap_{\substack{i = 0\\ i \neq j}}^{n-1} L_i^c $$
and thus $L_j \in \mu$, which is a contradiction to the assumption.
\end{proof}

Each Boolean algebra has an associated compact Hausdorff space $\mathcal{S}(\B)$, called its \textit{Stone space}, often also referred to as its dual Space, which points may be given as the ultrafilters of $\B$, where the topology is induced by the sets $\widehat{L} = \{\gamma \in \mathcal{S}(\B) \mid L \in \gamma \}$ for $L \in \B$.

The dual Space contains the underlying set $X$ as a dense subset. Its embedding is given by
$$x \mapsto \{L \in \mathcal{B} \mid x \in L\}.$$
It is not too hard to verify, that this set satisfies conditions $1.-4.$ and thus is an ultrafilter of $\B$.

The clopen sets of $\mathcal{S}(\B)$ then are exactly the sets $\widehat{L}$ where $L \in \B$, which are the closures of elements of $\B$ with respect to the topology on $\mathcal{S}(\B)$.

Pippenger gave a proof in his paper \cite{pip1997}, that the Stone space of the regular languages $\mathcal{S}(\Reg)$ is no other than the free profinite monoid $\Ahat$. This justifies that any profinite word may be seen as an ultrafilter of $\Reg$ and the correspondence will often be used in the following. An alternate proof, that clarifies the structure of this connection, is given below.

We say two spaces are \emph{homeomorphic} if there exists a bijection $f$ such that both $f$ and the inverse function of $f$ are continuous with respect to the topology the spaces are equipped with. Informally speaking, this says the the spaces are equipped with the same topology. 

\begin{theorem}\label{AhatSReg}
The space $\Ahat$ is homeomorphic to $\mathcal{S}(\Reg)$.
\end{theorem}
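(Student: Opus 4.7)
The plan is to construct an explicit bijection $\Phi \colon \hat{A} \to \mathcal{S}(\mathbf{Reg})$, verify it is continuous, and then appeal to a standard topology fact (a continuous bijection from a compact space to a Hausdorff space is a homeomorphism) to obtain the inverse direction for free. Concretely, for a profinite word $x \in \hat{A}$ represented by a Cauchy sequence $(x_n)_{n \in \N}$ in $A^*$, I set
\[
\Phi(x) \;=\; \{\, L \in \mathbf{Reg} \mid x_n \in L \text{ for all sufficiently large } n \,\}.
\]
The first task is well-definedness on equivalence classes. For any regular $L$ recognised by a morphism $\varphi\colon A^* \to M$ into a finite monoid $M$, the sequence $\varphi(x_n)$ is eventually constant (because $(x_n)$ is Cauchy, so eventually $d(x_n,x_m) < 2^{-|M|}$, which forces $\varphi(x_n)=\varphi(x_m)$). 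Thus $(x_n)$ is either eventually inside $L$ or eventually outside $L$, and the same argument applied to $(x_n)$ and an equivalent $(y_n)$ shows the choice is independent of the representative.

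Next I would verify that $\Phi(x)$ is an ultrafilter: the empty set is excluded because a sequence cannot be eventually in $\emptyset$; extension and finite-intersection closure are immediate from the corresponding properties of "eventually"; and the ultrafilter condition is exactly the dichotomy established above. For injectivity, if $x \neq y$ in $\hat{A}$, pick a finite monoid $M$ and a morphism $\varphi$ such that the eventual constant values $m_x := \hat\varphi(x)$ and $m_y := \hat\varphi(y)$ differ; then $\varphi^{-1}(m_x) \in \Phi(x)\setminus\Phi(y)$. For surjectivity, given an ultrafilter $\gamma$, I construct a representing Cauchy sequence as follows. Up to isomorphism there are only finitely many monoids of each size, so for each $n$ the family
\[
\mathcal{P}_n \;=\; \bigl\{\, \varphi^{-1}(m) \;\big|\; \varphi\colon A^* \to M,\ |M|\le n,\ m\in M \,\bigr\}
\]
generates a finite partition of $A^*$ by regular languages. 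By Lemma~\ref{lem:partitionimappendix}, $\gamma$ contains a unique block $P_n$ of this partition; pick any $x_n \in P_n$. The nesting $P_{n+1} \subseteq P_n$ (via intersection closure of $\gamma$) forces $d(x_n,x_m) \le 2^{-n}$ for $m \ge n$, so $(x_n)$ is Cauchy, and by construction $\Phi([(x_n)]) = \gamma$.

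Finally, for continuity, recall that the topology on $\mathcal{S}(\mathbf{Reg})$ is generated by the basic clopens $\widehat{L} = \{\gamma \mid L\in\gamma\}$ for $L\in\mathbf{Reg}$. By the definition of $\Phi$,
\[
\Phi^{-1}(\widehat{L}) \;=\; \{\, x \in \hat{A} \mid x_n \in L \text{ eventually} \,\} \;=\; \overline{L},
\]
which is clopen in $\hat{A}$ by the theorem of Almeida quoted above. Thus $\Phi$ is continuous. Since $\hat{A}$ is compact, $\mathcal{S}(\mathbf{Reg})$ is Hausdorff, and $\Phi$ is a continuous bijection between them, $\Phi$ is a homeomorphism.

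The step I expect to be most delicate is surjectivity: one must argue that finitely many non-isomorphic monoids of bounded size yield a \emph{finite} refinement of $A^*$ into regular classes, so that the ultrafilter's finite-intersection property lets us pull out a single descending chain $(P_n)$ whose representatives assemble into a genuine Cauchy sequence. Everything else is a routine translation between the "eventually in $L$" viewpoint and the ultrafilter viewpoint.
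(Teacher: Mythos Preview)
Your proof is correct and defines the same map $\Phi$ as the paper, but diverges from the paper in two places. For surjectivity, the paper argues topologically: the family $\{\overline{L} \mid L \in \gamma\}$ is a collection of closed subsets of $\hat{A}$ with the finite intersection property, so by compactness of $\hat{A}$ the intersection $\bigcap_{L\in\gamma}\overline{L}$ is nonempty, and any element of it maps to $\gamma$. Your argument instead builds a Cauchy sequence by hand from the nested partitions $\mathcal{P}_n$; this is more elementary in that it does not presuppose compactness of $\hat{A}$ (which itself takes some work), at the cost of being slightly longer. For the homeomorphism step, the paper checks that $\Phi$ carries clopens to clopens and then invokes the fact that in compact Hausdorff totally disconnected spaces the clopens determine the topology; your route via ``continuous bijection from compact to Hausdorff is a homeomorphism'' is the cleaner of the two, since it needs only one direction of continuity and a single standard theorem. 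Either package is fine; yours has the mild advantage of being more self-contained on the surjectivity side.
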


\begin{proof}
For a Cauchy sequence $u$ with respect to the introduced metric on $\Ahat$, define the set
$$F_u = \{L \in \Reg \mid \exists n_0 \in \N \quad \forall n \geq n_0 : u_n \in L\}.$$
To see that $F_u$ defines an ultrafilter of $\Reg$ observe that $F_u$ is a non-empty set closed under intersections and upsets. What is left to show is, that for any regular language $L$, the set $F_u$ contains either $L$ or its complement. For that let $L$ be a regular language and $h: A^* \rightarrow M$ a morphism into a finite monoid recognising $L$, such that $L = h^{-1}(K)$ for some $K \subseteq M$. Since $u$ is a Cauchy sequence, there exists an $N \in \N$ such that for all $n,m \geq N$ we have $d(u_n,u_m) < 2^{- \left| M \right|}$ and thus $h(u_n) = h(u_m)$. Since either $h(u_N) \in K$ or $h(u_N) \in K^c$ and $L^c = h^{-1}(K^c)$, this implies that either $u_n \in L$ or $u_n \in L^c$ for all $n \geq N$. Thus, $F_u$ is an ultrafilter of $\Reg$.

Furthermore, let $u$ and $v$ be two equivalent Cauchy sequences. Then, by the definition of equivalence, $u \sim v$ if and only if for each morphism $h: A^* \rightarrow M$ into a finite monoid, there exists an $N \in \N$ such that for all $n,m \geq N$ we have $d(u_n,v_m) < 2^{- \left| M \right|}$. By a similar argument to the one above, this is implies $F_u = F_v$.

By these observation, the mapping
\begin{align*}
\phi: \Ahat &\rightarrow \mathcal{S}(\Reg)\\
[u] & \mapsto F_u
\end{align*}
is well-defined.

We claim that this mapping is a bijection. To show, that it is surjective, let $\gamma \in \mathcal{S}(\Reg)$. The set
$$\{\overline{L} \mid L \in \gamma\}$$
is a collection of closed subsets of $\Ahat$ and, since $\gamma$ is an ultrafilter, has the finite intersection property. Furthermore, $\Ahat$ is a compact space, and thus
$$F = \bigcap_{L \in \gamma} \overline{L} \neq \emptyset.$$
Hence there exists a sequence $u$ such that $[u] \in F$. Let $L \in \gamma$. By definition of $F$, we obtain that $[u] \in \overline{L}$ and thus there exists an $n_0 \in \N$ such that for all $n \geq n_0$ we have $u_n \in L$. But this is exactly saying that $L \in F_u$. Hence $\gamma \subseteq F_u$ and since ultrafilters are maximal $\gamma = F_u$. Thus the mapping is surjective.

Furthermore, it is injective. Suppose that there exist two Cauchy sequences $u$ and $v$ with $[u] \neq [v]$. Then there exists a $L \in \Reg$ such that $[u] \in \overline{L}$ but $[v] \notin \overline{L}$. Since $\gamma$ is an ultrafilter, either $L$ or $L^c$ is an element of $\gamma$. Hence, either $[u] \in F$ or $[v] \in F$ and $F$ is a singleton. Thus, $\phi$ is a bijective mapping and there is a well-defined map
\begin{align*}
\psi: \mathcal{S}(\Reg) &\rightarrow \Ahat\\
\gamma &\mapsto \bigcap_{L \in \gamma} \overline{L}
\end{align*}
By construction, these maps are mutually inverse and $\widehat{L}  = \{\gamma \mid L \in \gamma\}$ gets mapped to $\overline{L}$ and vice versa. Thus, clopen sets get mapped to clopens. Both $\Ahat$ and $\mathcal{S}(\Reg)$ are known to be compact, Hausdorff and totally disconnected. These conditions are sufficient for the topology being uniquely determined by the clopen sets. Thus, the two spaces are homeomorphic.
\end{proof}

Take as an example the profinite word $a^{\omega}$ when seen as an element of $\mathcal{S}(\Reg)$
$$a^{\omega} = \{L \in \Reg \mid \exists n_0 \in \N \quad \forall n \geq n_0 : a^{n!} \in L\}.$$

Another Stone space that is of special interest, is the dual of the powerset of a set $X$. It is called the \emph{Stone-\v{C}ech} compactification of $X$ and is denoted by $\beta X$. An important property of $\beta X$ is that any map $f: X \rightarrow K$ into a compact Hausdorff space $K$, can be extended uniquely to a continuous map $\beta f: \beta X \rightarrow K$. Furthermore any map $f: X \rightarrow Y$ has a unique continuous extension, also denoted by $\beta f: \beta X \rightarrow \beta Y$. It is defined by the equivalence
$$L \in \beta f (\gamma) \Leftrightarrow f^{-1}(L) \in \gamma$$
for all $L \in \mathcal{P}(X)$ and $\gamma \in \beta X$.

Another useful property that originates from Stone duality is that if $\mathcal{B}$ is a Boolean algebra and $\mathcal{C}$ is a subalgebra of $\mathcal{B}$, then the associated Stone space $\mathcal{S}(\mathcal{C})$ is a quotient of $\mathcal{S}(\mathcal{B})$. If there is an embedding of $\mathcal{C}$ in $\mathcal{B}$ then there is a surjective map between the Stone spaces.
\begin{align*}
\mathcal{C} & \rightarrowtail \mathcal{B}\\
\mathcal{S}(\mathcal{C}) & \hookleftarrow \mathcal{S}(\mathcal{B})
\end{align*}

Since any Boolean algebra $\mathcal{B}$ is a subalgebra of the powerset $\mathcal{P}(X)$, we know that its Stone space $\mathcal{S}(\mathcal{B})$ will be a quotient of $\beta(X)$. 
Note that an ultrafilter of $\mathcal{C}$ when seen as a collection of sets is not an ultrafilter of $\mathcal{B}$, since the fourth property is violated. However, it still satisfies conditions $1.-3.$. A collection of sets having these properties is called a \emph{filter} of $\mathcal{B}$. As a special case of this, any profinite word, may be seen as a filter of $\beta A^*$. Again, for example, $a^{\omega}$ may be seen as an ultrafilter of $\mathcal{S}(\Reg)$ and thus is a filter of $\beta A^*$.
Hence, any point of $\mathcal{S}(\mathcal{C})$ defines a filter on $\mathcal{B}$. Furthermore, since ultrafilters are maximal filters with respect to inclusion, by Zorn's Lemma, any filter can be extended to an (non-unique) ultrafilter.

An even weaker property holds. It will often be necessary to show that there exist two ultrafilters having certain properties. For these properties it will be sufficient to know that the filter contains a specific collection of subsets. A collection of subsets is said to have the finite intersection property, if any two sets have non-empty intersection. We call such a set a \emph{filterbase}. By the same argument, any filterbase can be extended to an ultrafilter. We will use this method to construct ultrafilters from filterbases to obtain ultrafilters with certain properties.

Furthermore, the set of all filters of $\mathcal{B}$, denoted by $\mathcal{F}(\mathcal{B})$ is isomorphic to the \emph{Vietoris} space of $\mathcal{S}(\mathcal{B})$, the space of all closed subsets of $\mathcal{S}(\mathcal{B})$, which is denoted by $\mathbb{V}(\mathcal{S}(\mathcal{B}))$.

\subsection{Equations}\label{apx:equations}

Let $\mathcal{B}$ be a Boolean algebra and $\mathcal{C}$ a subalgebra of $\mathcal{B}$. A $\mathcal{B}$-equation is a tuple $(\mu,\nu)$ of ultrafilters on $\mathcal{B}$. We say that $\mathcal{C}$ satisfies the $\mathcal{B}$-equation $(\mu,\nu)$, if the following equivalence holds.

$$L \in \mu \Leftrightarrow L \in \nu$$

for all $L \in \mathcal{C}$. For a more convenient notation, we write that $\mathcal{C}$ satisfies the equation $[\mu \leftrightarrow \nu]$. Since $\mathcal{S}(\mathcal{C})$ is a quotient of $\mathcal{S}(\mathcal{B})$, it is defined by a set of $\mathcal{B}$-equations. When specializing this to the Boolean algebra of regular language, one obtains the following result.

\begin{theorem}[\cite{GGP}]
Every Boolean algebra of regular languages is defined by a set of profinite equations of the form $[u \leftrightarrow v]$, where $u$ and $v$ are profinite words.
\end{theorem}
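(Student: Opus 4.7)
The plan is to read the equations off Stone duality directly. Since $\mathcal{C}$ is a subalgebra of $\Reg$, its Stone space $\mathcal{S}(\mathcal{C})$ is a quotient of $\mathcal{S}(\Reg)$, which by Theorem~\ref{AhatSReg} is homeomorphic to $\Ahat$. Let $q \colon \Ahat \to \mathcal{S}(\mathcal{C})$ be the quotient map, given explicitly by restricting an ultrafilter to $\mathcal{C}$, $q(\gamma) = \gamma \cap \mathcal{C}$. I would then define the candidate set of equations as
\[
E = \{\, [u \leftrightarrow v] \mid u,v \in \Ahat,\ q(u) = q(v) \,\}
\]
and show that the regular languages satisfying every equation in $E$ are exactly the elements of $\mathcal{C}$.

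Soundness is immediate: if $L \in \mathcal{C}$ and $q(u) = q(v)$, then $u$ and $v$ contain the same elements of $\mathcal{C}$, so in particular $L \in u \Leftrightarrow L \in v$. For completeness, let $L$ be regular and satisfy every equation in $E$. Unwinding this means that $\widehat{L} \subseteq \Ahat$ is saturated under the equivalence induced by $q$, and the same holds for $\widehat{L^c}$ (using that $u,v$ are ultrafilters, so $L \in u \Leftrightarrow L \in v$ is equivalent to $L^c \in u \Leftrightarrow L^c \in v$). Since $L$ is regular, $\widehat{L}$ and $\widehat{L^c}$ are clopen in the compact space $\Ahat$, hence their images $q(\widehat{L})$ and $q(\widehat{L^c})$ are compact, and so closed in the Hausdorff space $\mathcal{S}(\mathcal{C})$. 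By saturation they are disjoint, and together they cover $\mathcal{S}(\mathcal{C})$, so each is in fact clopen. By Stone duality the clopens of $\mathcal{S}(\mathcal{C})$ are exactly the sets of the form $\widehat{L'}$ for $L' \in \mathcal{C}$, so there is $L' \in \mathcal{C}$ with $q(\widehat{L}) = \widehat{L'}$. Evaluating on the principal ultrafilter $\gamma_w$ of a word $w \in A^*$ yields $w \in L \Leftrightarrow \gamma_w \in \widehat{L} \Leftrightarrow q(\gamma_w) \in \widehat{L'} \Leftrightarrow w \in L'$, whence $L = L' \in \mathcal{C}$.

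I expect the main technical obstacle to be the step in which the saturated clopen $\widehat{L}$ descends to a clopen of the quotient. It relies on the specific combination of Stone-space properties -- compact, Hausdorff, totally disconnected -- together with $q$ being a continuous surjection between such spaces; in a general topological quotient the analogous statement can fail, so one must argue concretely as above by complementing $\widehat{L}$ and using that two disjoint closed sets covering a Hausdorff space are each clopen. Once that step is secured, the remainder is routine: density of $A^*$ in $\Ahat$ together with the clopen/language bijection pins down $L$ uniquely.
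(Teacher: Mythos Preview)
Your argument is correct and follows exactly the duality-based route the paper has in mind: the paper does not give a detailed proof of this cited result, but immediately before the theorem it records the one-line justification ``Since $\mathcal{S}(\mathcal{C})$ is a quotient of $\mathcal{S}(\mathcal{B})$, it is defined by a set of $\mathcal{B}$-equations,'' and your proof is precisely the unpacking of that sentence in the special case $\mathcal{B}=\Reg$, $\mathcal{S}(\Reg)\cong\Ahat$. The saturation/clopen step you flagged as the delicate point is handled correctly.
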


Gehrke, Grigorieff and Pin showed more generally.

\begin{theorem}[\cite{GGP}]
Every Boolean algebra of languages is defined by a set of ultrafilter equations of the form $[\mu \leftrightarrow \nu]$, where $\mu$ and $\nu$ are ultrafilters on the set of words.
\end{theorem}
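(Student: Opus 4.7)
The plan is to apply Stone duality directly to the inclusion $\mathcal{C} \hookrightarrow \mathcal{P}(A^*)$. As noted at the end of Section~\ref{apx:duality}, this dualises to a continuous surjection $q : \beta A^* \twoheadrightarrow \mathcal{S}(\mathcal{C})$ given by restriction, $q(\mu) = \mu \cap \mathcal{C}$. Unfolded, the relation $q(\mu) = q(\nu)$ on $\beta A^*$ is exactly ``$L \in \mu \Leftrightarrow L \in \nu$ for all $L \in \mathcal{C}$'', i.e.\ precisely the condition that $\mathcal{C}$ satisfies the equation $[\mu \leftrightarrow \nu]$. The natural candidate set of equations is therefore
$$E \;=\; \{\,[\mu \leftrightarrow \nu] \mid \mu,\nu \in \beta A^*,\ q(\mu) = q(\nu)\,\},$$
and soundness is immediate: every $L \in \mathcal{C}$ satisfies every equation in $E$ by construction.

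The content lies in completeness: if some $L \subseteq A^*$ satisfies every equation in $E$, then $L \in \mathcal{C}$. I would argue topologically. The set $\widehat{L} = \{\mu \in \beta A^* \mid L \in \mu\}$ is a clopen of $\beta A^*$, and the hypothesis that $L$ satisfies all equations in $E$ says exactly that $\widehat{L}$ is $\sim$-saturated, where $\mu \sim \nu$ iff $q(\mu)=q(\nu)$; equivalently, $\widehat{L}$ is a union of fibres of $q$. Because $\beta A^*$ is compact and $\mathcal{S}(\mathcal{C})$ is Hausdorff, $q$ is a closed map, so both $q(\widehat{L})$ and $q(\beta A^* \setminus \widehat{L})$ are closed, and by saturation they partition $\mathcal{S}(\mathcal{C})$. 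Hence $q(\widehat{L})$ is clopen in $\mathcal{S}(\mathcal{C})$, and by Stone duality every clopen of $\mathcal{S}(\mathcal{C})$ is of the form $\widehat{K}$ for some $K \in \mathcal{C}$. Thus $\widehat{L} = q^{-1}(\widehat{K})$ for some $K \in \mathcal{C}$, and restricting to principal ultrafilters (the canonical image of $A^*$ in $\beta A^*$ described in Section~\ref{apx:duality}) yields $L = K$, so $L \in \mathcal{C}$.

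The main obstacle I anticipate is the topological step of completeness: carefully justifying that a $\sim$-saturated clopen of $\beta A^*$ descends to a clopen of $\mathcal{S}(\mathcal{C})$, and that pulling it back recovers the original set. This amounts to using the Stone-duality dictionary in a slightly subtle direction, and must be done without circular appeal to the statement being proved. A complementary subtlety is that the equations $[\mu \leftrightarrow \nu]$ are purely biconditional, which fits exactly with the symmetry of $\sim$; this is precisely what makes the candidate $E$ both expressible in the required form and strong enough to cut out $\mathcal{C}$, and it is what I would highlight as the conceptual reason the theorem holds in this clean form.
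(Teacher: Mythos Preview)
Your argument is correct and is precisely the standard Stone-duality proof: dualise the inclusion $\mathcal{C}\hookrightarrow\mathcal{P}(A^*)$ to the quotient $q:\beta A^*\twoheadrightarrow\mathcal{S}(\mathcal{C})$, take as equations the kernel pair of $q$, and for completeness use that a continuous surjection from a compact space to a Hausdorff space is closed, so a saturated clopen pushes forward to a clopen and hence corresponds to an element of $\mathcal{C}$.

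However, there is nothing in this paper to compare it against. The theorem is stated in Section~\ref{apx:equations} as a citation of \cite{GGP} and is not proved here; the paper merely records it as background and then uses it (via Lemma~\ref{lem:projection} and Lemma~\ref{lem:equations}) to justify that $(\BlockP)_A$ admits \emph{some} set of ultrafilter equations before going on to exhibit a concrete one. So your proposal is not an alternative to the paper's proof but rather a reconstruction of the cited result. As such it is fine; the only point worth tightening is the last identification $L=K$: you should say explicitly that $q^{-1}(\widehat{K})=\widehat{K}_{\beta A^*}=\{\mu\in\beta A^*\mid K\in\mu\}$, whence $\widehat{L}=\widehat{K}_{\beta A^*}$ as clopens of $\beta A^*$, and the map $L\mapsto\widehat{L}$ is injective on $\mathcal{P}(A^*)$ (Stone duality for the full powerset), giving $L=K$ without needing to invoke principal ultrafilters separately.
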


Note that the first result is similar to the famous theorem of Reiterman.

\begin{theorem}[\cite{reitermanTheorem}]
Every variety of languages is defined by a set of profinite equations.
\end{theorem}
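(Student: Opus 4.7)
The plan is to derive this statement as an almost immediate corollary of the preceding theorem on Boolean algebras of regular languages, applied componentwise across alphabets.

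First I would observe that, by the very definition of a variety of languages, for every alphabet $A$ the component $\mathcal{V}_A$ is a Boolean algebra of regular languages. Applying the preceding theorem to each $\mathcal{V}_A$ separately then yields, for each $A$, a set $E_A$ of profinite equations $[u \leftrightarrow v]$ with $u, v \in \Ahat$ such that a language $L \subseteq A^*$ lies in $\mathcal{V}_A$ if and only if $L$ satisfies every equation in $E_A$. I would then take the disjoint union $E = \bigcup_A E_A$ across alphabets as the candidate defining set for the whole variety.

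Next I would verify that $\mathcal{V}$ is defined by $E$. Soundness is immediate from the construction: any $L \in \mathcal{V}_A$ satisfies every equation of $E_A$, and hence every equation of $E$ that lives over its alphabet. For the converse, any $L \subseteq A^*$ satisfying all equations of $E$ in particular satisfies those in $E_A$, and the preceding theorem forces $L \in \mathcal{V}_A$. So the model class of $E$ is exactly $\mathcal{V}$.

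The genuine additional content over the Boolean algebra result is that the class defined by $E$ must inherit the variety closure properties (under inverse morphisms and one-letter quotients) and not merely be a family of Boolean algebras. Since the model class coincides with $\mathcal{V}$, which is a variety by hypothesis, these closure properties hold automatically. If I wanted to verify them intrinsically, the key lever is the transport identity $\overline{\varphi^{-1}(L)} = \hat{\varphi}^{-1}(\bar{L})$ for a morphism $\varphi : A^* \to B^*$: if $L$ satisfies $[u \leftrightarrow v]$ in $\hat{B}$, then $\varphi^{-1}(L)$ satisfies $[u' \leftrightarrow v']$ in $\Ahat$ whenever $\hat{\varphi}(u') = u$ and $\hat{\varphi}(v') = v$, so choosing each $E_A$ maximally (as the set of \emph{all} profinite equations satisfied by $\mathcal{V}_A$) ensures that the transported equations are already included. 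The main obstacle in my view is therefore not in the deduction itself but was already concentrated in the preceding Boolean algebra theorem, which provides all the topological and duality machinery used here; the step from Boolean algebra to variety is a straightforward coherence check resting on continuity of $\hat{\varphi}$ and the behavior of closures under it.
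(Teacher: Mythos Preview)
The paper does not supply its own proof of this theorem: it is stated as Reiterman's result with a citation and no argument, so there is nothing to compare your proposal against on the level of proof strategy.

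That said, your reduction does not establish what the paper means by Reiterman's theorem. Immediately after the statement the paper remarks that this result is \emph{stronger} than the preceding Boolean-algebra theorem precisely because ``the equations are not dependant on the underlying alphabet'': a profinite identity in Reiterman's sense is a pair $u,v$ of profinite words over a fixed alphabet of \emph{variables}, and a variety satisfies it when $\hat\varphi(u)=\hat\varphi(v)$ for every morphism $\varphi$ into every monoid of the corresponding monoid variety. Your construction $E=\bigcup_A E_A$ is just the per-alphabet GGP result assembled into one family; the equations in $E_A$ live in $\widehat{A}$ and say nothing about languages over $B\neq A$. Showing that this family cuts out $\mathcal V$ alphabet by alphabet is immediate from GGP and adds no content. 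The actual work in Reiterman is to produce a single set of identities over variables whose closure under all substitutions defines $\mathcal V$ over every alphabet simultaneously; this requires the Eilenberg correspondence with monoid varieties and the universal property of the free profinite monoid, neither of which your argument invokes. Your third paragraph gestures at transport along $\hat\varphi$, but only to check that the model class is a variety, which is circular since you already identified it with $\mathcal V$.
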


This theorem is stronger in a sense, that the equations are not dependant on the underlying alphabet. Since varieties are closed under taking inverse morphisms, there is one set of equations, that suffices to describe the whole variety and not just the Boolean algebra assigned to a fixed alphabet.

In the profinite case, equations expose a strong connection to morphisms. Let $\VMon$ be a variety of monoids and $\LangVar_A$ the corresponding Boolean algebra. Then $\LangVar_A$ satisfies the profinite equation $u = v$ if and only if for each morphism $\varphi: A^* \rightarrow M$ into a monoid $M$ of $\VMon$, the equality $\hat{\varphi}(u) = \hat{\varphi}(v)$ holds.

\newpage

\section{Constant Size Circuits Families}\label{sec:circuits}

In this paper we consider circuits over arbitrary alphabets. In contrast to the usual notion of size we do not count input gates, and hence only call them inputs. Thus, the size of a circuit is the number of inner nodes.
This is necessary to allow circuit families with gates of unbounded fan-in to access inputs of arbitrary length and still have constant size.

\begin{definition}[Family of Boolean functions defined by a language]
A language $L\subseteq\{0,1\}^*$ in a natural way defines a family of Boolean functions, denoted by $f^L = (f^L_i)_{i\in\N}$ where $f^L_i(x_0,\dots,x_{i-1})=1$ iff $x_0\dots x_{i-1}\in L$.
\end{definition}

\begin{definition}
A \emph{variety of languages} is a class of languages $\mathcal{V}$ such that
\begin{enumerate}
\item for each alphabet $A$, $\mathcal{V}_{A}$ is a Boolean algebra over $A^*$
\item for each morphism $\varphi : A^* \rightarrow B^*$, the condition $L \in \mathcal{V}_{B}$ implies $\varphi^{-1}(L) \in \mathcal{V}_{A}$
\item for each  $L \in \mathcal{V}_{A}$ and $a \in A$, we have $a^{-1}L := \{w \in A^* \mid aw \in L\} \in \mathcal{V}_{A}$ and\\ $La^{-1} := \{w \in A^* \mid wa \in L\} \in \mathcal{V}_{A}$
\end{enumerate}
\end{definition}

Recall that a base is a set containing Boolean functions and families of Boolean functions.
To treat circuits in a more general way we will define a base defined by a variety of languages. This allows us to describe different constant size circuit families over arbitrary alphabets simply by considering different varieties.
The definition will allow a base to consist of an infinite number of Boolean families, but a circuit family over an (infinite) base is only allowed to use a finite subset of the elements in the base. 

\begin{definition}[Bases defined by a variety of languages]
Given a variety $\mathcal V$ of regular languages, $\mathcal V_{\{0,1\}}$ is a collection of languages in $\{0,1\}^*$ and each of these languages defines a family of Boolean functions. We call the set $\{f^L \mid L \in \mathcal V_{\{0,1\}}\}$ the base defined by $\mathcal V$.
\end{definition}

For our purpose, it suffices to consider bases generated by varieties of languages, where the languages are regular and commutative. This is not much of a limitation as many gate types correspond to commutative regular languages (see table on the left of Figure \ref{fig:gatetype}).

\begin{figure}[h!]
\begin{tabular}{ll}
\toprule
Gate Type & Language \\ 
\midrule
$\land$ & $1^*$ \\
$\lor$  & $\{0,1\}^*1\{0,1\}^*$ \\
$\modp$  & $\{0,1\}^*((10^*)^p)^*$ \\
\bottomrule
\end{tabular}
\hfill
\begin{tabular}{ll}
\toprule
Circuit class & Base \\ 
\midrule
$\ACzero$ & $\{\land,\lor\}$ \\
$\CCzero$ & $\{\modp \mid p\in\N\}$\\
$\ACCzero$ & $\{\land,\lor,\modp \mid p\in\N\}$ \\
\bottomrule
\end{tabular}\\[3pt]
\caption{On the left: Typical gate types and the languages they are defined by. On the right: Typical circuit classes}\label{fig:gatetype}
\end{figure}

The following lemma implies that for each circuit over a base generated by a variety of regular commutative languages $\LangVar$, there exists a circuit that accepts exactly the same languages and can be written as one layer of gates from $\LangVar$ accessing the inputs and below an $\NCzero$ circuit.

\begin{lemma}\label{lem:circuitlemma}
Let $\CircBase$ be a base generated by a variety $\LangVar$ of regular commutative languages. A language $L$ is recognized by a constant size circuit family over $\CircBase$ if and only if it is recognized by a constant size circuit family where the gates only accessing inputs are from $\CircBase$, and all other gates are labeled by $\land_2$ or $\lor_2$.
\end{lemma}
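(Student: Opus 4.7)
The reverse implication is straightforward (any normal-form circuit is already a constant size circuit family, up to the fact that $\land_2,\lor_2$ appear as $f^{1^*}_2$ and $f^{\{0,1\}^*1\{0,1\}^*}_2$, and these are the generic Boolean gates available to the top layer). I focus on the forward direction.

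Fix a constant $c$ bounding the number of inner gates of every $C_n$. The strategy is to rewrite $C_n$ by explicitly enumerating all possible valuations of its inner gates. Number the inner gates $g_1,\dots,g_c$ and let $v \in \{0,1\}^c$ represent a proposed assignment of values to them. For any input $w$, exactly one $v$ is actually realized by $C_n$ on $w$, and $C_n$ accepts $w$ iff that realized $v$ has $v_{\text{out}}=1$. Consequently
$$[\,C_n(w)=1\,] \;=\; \bigvee_{\substack{v\in\{0,1\}^c\\ v_{\text{out}}=1}}\;\bigwedge_{i=1}^c\bigl[\,\text{$g_i$ is consistent with $v_{g_i}$ under $(v,w)$}\,\bigr].$$

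The key step is to realize each consistency predicate by a single $\CircBase$ gate reading only the direct inputs of $g_i$. Suppose $g_i$ is labeled by $f^L$ with $L\in\LangVar$ commutative, and split its fan-in $k=k_1+k_2$, where $k_1$ wires come from other gates and $k_2$ from inputs directly. Because $L$ is commutative, $f^L_k(y,z)$ depends only on $s_1+s_2$, with $s_1=y_1+\cdots+y_{k_1}$ and $s_2=z_1+\cdots+z_{k_2}$. For a fixed $v$ the value $s_1$ is determined (it is the sum of the $v$-entries of the predecessor gates of $g_i$), so the consistency check reduces to a condition on $s_2$ alone, namely $z\in L'$ where $L' := (1^{s_1}0^{k_1-s_1})^{-1}L$. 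Closure of $\LangVar$ under left quotients places $L'$ in $\LangVar$, and Boolean closure places $(L')^c$ there as well. Hence a single $\CircBase$ gate $f^{L'}_{k_2}$ -- or $f^{(L')^c}_{k_2}$, according to whether $v_{g_i}=1$ or $0$ -- reading the same $k_2$ direct inputs as $g_i$, computes the consistency bit.

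The outer $\bigvee\bigwedge$ has at most $c\cdot 2^c$ terms, hence a constant number, and is implemented by a tree of $\land_2,\lor_2$ gates fed by the new $\CircBase$ bottom layer. No negation is needed above the inputs, since the only complementation that would arise has already been absorbed into the choice of $L'$ versus $(L')^c$ at the bottom. The principal obstacle is verifying that the shifted language $L'$ and its complement really stay inside $\LangVar$: this is precisely what the hypothesis ``variety of commutative regular languages'' buys us, through closure under left quotients and Boolean operations; without commutativity the symmetric reduction that isolates $s_2$ from $s_1$ would break, and without variety closure one could not re-express the consistency predicates as single $\CircBase$ gates.
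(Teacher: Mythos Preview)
Your forward-direction argument is correct and takes a genuinely different route from the paper's. The paper proceeds \emph{locally}: it walks through the gates in topological order and, for a gate $g$ labelled by $f^L$, introduces one new bottom-layer gate for \emph{every} two-sided quotient $L_1,\dots,L_l$ of $L$, each wired to the same direct inputs as $g$; knowing the values of all these quotient gates pins down the syntactic-monoid image of the input portion of $g$'s fan-in, so the value of $g$ becomes a bounded Boolean function of these $l$ gates together with the at most $k-1$ already-normalized predecessor gates. Your approach is \emph{global}: you guess the entire valuation $v\in\{0,1\}^c$ of the inner gates at once, and for each pair $(g_i,v)$ you need only the single quotient $(1^{s_1}0^{k_1-s_1})^{-1}L$ determined by $v$. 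Both arguments rely on commutativity to push the contribution of inner predecessors to one side so it can be quotiented away, and on regularity (finitely many quotients) to keep the new base finite. Your construction is more elementary and delivers the two-layer shape in one shot, at the cost of a $c\cdot 2^c$ bottom layer; the paper's is more economical (roughly $c\cdot l$ bottom gates) but carries inductive bookkeeping.

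One caveat: your dismissal of the reverse implication is too quick. You assert that $\land_2,\lor_2$ ``appear as $f^{1^*}_2$ and $f^{\{0,1\}^*1\{0,1\}^*}_2$'', but $1^*$ and $\{0,1\}^*1\{0,1\}^*$ need not lie in $\mathcal V_{\{0,1\}}$ --- take $\mathcal V$ to be the variety generated by $\mathbb Z/2\mathbb Z$, where no $L\in\mathcal V_{\{0,1\}}$ has $L\cap\{0,1\}^2=\{11\}$. The paper's proof also addresses only the forward direction, so you are no worse off than the original; but be aware that the ``only if'' half is not the triviality you suggest.
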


\begin{proof}\label{proof:circuitlemma}
Assume $(C_i)_{i\in\N}$ has a size bound of $k$.
For each $i$ we generate a new circuit $C'_i$. In a topological ordered way we proceed for each gate $g\in C_i$ and construct a corresponding gate in $C'_i$.

If the gate $g$ is labeled by a $l$-ary Boolean function, then it is clear that the value can be determined by the value of a bounded number of gates or inputs that by the topological order are already in $C'_i$.

Otherwise it takes more effort to show this. Assume $g$ is labeled with a Boolean family $f$. As the base is generated by a variety $\LangVar$, there is a language $L\in\LangVar_{\{0,1\}}$ that corresponds to $f$. Let $L_1,\dots,L_l$ be all two sided quotients of $L$ which implies that $L_i\in\LangVar_{\{0,1\}}$ and hence the corresponding Boolean family $f^{L_i}$ is also in the base. For each $f^{L_i}$ we have a gate with this label in $C'_i$ that is connected in the same way to the inputs as $g$ to the corresponding inputs in $C_i$. By the topological order all preceding gates of $g$ have already corresponding gates in $C'_i$, and since $C_i$ is bounded by size $k$ there are at most $k-1$ such gates. As $L$ is a commutative regular language the value of $g$ can be determined by the value of the newly generated $l$ gates and the preceding at most $k-1$ gates (This is because the two sided quotients actually allow us to determine the element of the syntactic monoid that this part of the input computes to. For the other at most $k-1$ this is also possible as they contribute exactly one letter. Hence containment in $L$ or the question if $f$ evaluates to $1$ depends only on the product of these monoid elements). Hence it can be determined by the value of a bounded number of gates in $C'_i$. (Note that $l$ is bounded since the base is finite and hence we only consider a finite number of languages $L$).

As in both cases the value only depends on a bounded number of gates we can add a Boolean circuit with $\land_2$ and $\lor_2$ gates to $C'_i$ that evaluates this number. The gate that computes this number is the corresponding gate to $g$.

\end{proof}

\section{The Block Product for Varieties of Languages}\label{sec:blockproduct}

In the last section, we introduced bases for circuits that were defined by a variety of languages. 
Here we will define an unary operation $\cdot\block\Parb$ on varieties mapping a variety of commutative regular languages $\mathcal V$ to the variety of languages $\mathcal V\block\Parb$ recognized by constant size circuits over the base $\mathcal V$. 

This rather strange looking notation comes from the algebraic background where similar ideas have been used on the algebraic side in \cite{WBP1}. Using the algebraic tools from that paper one could show that constant size circuit families recognize the same languages as the finitely typed groups in the block product $\mathbf V\block\Parbbf$, where $\mathbf V$ are the (typed) monoids corresponding to the gate types and $\Parbbf$ are the typed monoids corresponding arbitrary predicates and hence to the non-uniform wiring of the circuit family.
In this paper however, we omit the algebraic definition of the block product of (typed) monoids but rather define a mechanism, that provides us with the same languages as recognized by the block product, and that is purely defined on the language side. For more details on the algebraic and logic side for varieties of regular languages we refer to a survey about the block product principle \cite{TeTh07} or for the non-regular case to \cite{KLR}.

Here we will restrict to the unary operation $\cdot\block\Parb$ suitable for our constant size circuit classes.

There is a natural morphism $|\cdot|:A^*\rightarrow\N$ that maps each word to its length. We say a mapping $f \colon A^* \rightarrow B^*$ is length preserving, if $\left| f(u) \right| = \left| u \right|$ for all $u \in A^*$.

\begin{definition}[$\N$-transduction]
Let $\mathcal D$ be a finite partition of $\N^2$. By $[(i,j)]_{\mathcal D}$ denote the equivalence class that $(i,j)$ belongs to.
Then a $\N$-transduction is a length preserving map $\tau_{\mathcal D}:A^*\rightarrow (A\times\mathcal D)^*$, where $(\tau_{\mathcal D}(w))_i=(w_i,[(|w_{<i}|,|w_{>i}|)]_\mathcal D) = (w_i,[(i,\left| w \right|-i-1)]_{\mathcal D})$.
\end{definition}

Finally we can use these transductions to define the block product. We only define a unary operation that maps a variety $\mathcal V$ to a variety $\mathcal V\block\Parb$. This rather strange notation stems from the strong connection of $\Parb$ with $\N$-transducers and should be just considered notation here.

\begin{definition}[$\mathcal V\block \Parb$]
Let $\mathcal V$ be a variety of languages, then we define $\mathcal V\block \Parb$ as the variety of languages, where $(\mathcal V\block \Parb)_{A}$ is generated by the languages $\tau_{\mathcal D}^{-1}(L)$ for all partitions $\mathcal D$ of $\N^2$, and $L\in V_{A\times\mathcal D}$.
\end{definition}

Because of the connection to the block product we call the languages $\mathcal V\block \Parb$.

\begin{lemma}\label{lem:taucircuit}
The languages in constant size circuits over a base defined by the variety $\mathcal V$, are exactly the languages in $\mathcal V\block\Parb$.
\end{lemma}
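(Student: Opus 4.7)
The plan is to prove this as a two-sided inclusion, relying on Lemma~\ref{lem:circuitlemma} which reduces any constant-size circuit family over $\mathcal V$ to the normal form of a single layer of $\mathcal V$-gates accessing the inputs, topped by an $\NCzero$ circuit.

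First I would attack ``circuits $\subseteq \mathcal V \block \Parb$''. Since $\mathcal V \block \Parb$ is a variety and hence a Boolean algebra, the constant-size $\NCzero$ top reduces the task to showing that each bottom $\mathcal V$-gate computes a language in $\mathcal V \block \Parb$. For input length $n$, such a gate is wired to a set of inputs $(i,a) \in \{0,\dots,n-1\} \times A$; the wiring pattern $W_n(i) = \{a : (i,a) \text{ is wired}\} \subseteq A$ depends on $(i, n-i-1)$ and takes values in the finite power set $\mathcal{P}(A)$, thereby inducing a finite partition $\mathcal D$ of $\N^2$. Letting $\varphi : (A \times \mathcal D)^* \to \{0,1\}^*$ be the letter-to-letter morphism sending $(a,d) \mapsto 1$ if $a \in d$ and to $0$ otherwise, I would take the preimage $\varphi^{-1}(L)$ of the gate's defining language $L \in \mathcal V_{\{0,1\}}$, which lies in $\mathcal V_{A \times \mathcal D}$ by variety closure under inverse morphisms, and argue that $\tau_{\mathcal D}^{-1}(\varphi^{-1}(L))$ equals the gate's output language (after possibly refining $\mathcal D$ so that it also records each position's fan-in contribution to the gate).

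For the reverse inclusion I would show that every generator $\tau_{\mathcal D}^{-1}(L)$, with $L \in \mathcal V_{A \times \mathcal D}$, is recognised by a constant-size circuit over $\mathcal V$. The commutativity assumption is central here: the count $c_{a,d}(\tau_{\mathcal D}(w))$ of the letter $(a,d)$ in $\tau_{\mathcal D}(w)$ equals the number of positions $i$ with $w_i = a$ and $[(i, n-i-1)]_{\mathcal D} = d$, and a single $\mathcal V$-gate wired to exactly those input nodes $(i,a)$ can tally it. I would obtain each gate's $\mathcal V_{\{0,1\}}$-language by pulling $L$ back along a letter-to-letter morphism that sends one chosen pair $(a,d)$ to $1$ and all other pairs to $0$; this extracts the per-$(a,d)$ counting information that the commutative regular $L$ requires. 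Since $|A \times \mathcal D|$ is finite, constantly many such gates suffice, and an $\NCzero$ top glues them into the Boolean decision for $L$.

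The hard part will be a length-versus-fan-in mismatch. A commutative regular language in $\mathcal V_{\{0,1\}}$ can depend on both the count of $1$'s and the total word length, whereas the bit string presented to a gate has length equal to the gate's fan-in $k_n$ rather than to $|w| = n$. In the forward direction this forces the partition $\mathcal D$ to track more than the wiring set $W_n(i)$ alone; in the backward direction, care is needed in choosing the inverse morphism so that the length seen by the $\mathcal V_{\{0,1\}}$-gate is consistent with the $A \times \mathcal D$-structure produced by $\tau_{\mathcal D}$. Once this bookkeeping is settled, the remaining steps are routine applications of Lemma~\ref{lem:circuitlemma} and the variety axioms.
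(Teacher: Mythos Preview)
Your overall architecture is the paper's: reduce via Lemma~\ref{lem:circuitlemma} to a single $\mathcal V$-gate, encode its wiring in a finite partition $\mathcal D$ of $\N^2$, and pass between $\mathcal V_{\{0,1\}}$ and $\mathcal V_{A\times\mathcal D}$ through inverse morphisms. The place where your argument would stall is the restriction to \emph{letter-to-letter} morphisms $(A\times\mathcal D)^*\to\{0,1\}^*$.

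You have correctly located the difficulty as a length mismatch, but refining $\mathcal D$ alone cannot close it: any letter-to-letter morphism applied to $\tau_{\mathcal D}(w)$ yields a string of length $|w|$, whereas a single input position may feed several bits to the gate (several input nodes $x_i\in S$ for the same $i$ can be wired in; the paper bounds this multiplicity by a constant $c$ using regularity of the gate language, and counts the possible wiring types as at most $(2^{|A|})^c$). The paper's remedy is simply to drop the length-preserving constraint: it takes a general morphism $h:(A\times\mathcal D)^*\to\{0,1\}^*$ sending each letter $(a,P)$ to the \emph{string} of bits that an input in class $P$ reading $a$ contributes to the gate, and uses variety closure under inverse images of arbitrary monoid morphisms to get $h^{-1}(L')\in\mathcal V_{A\times\mathcal D}$, whence $\tau_{\mathcal D}^{-1}(h^{-1}(L'))$ is the gate's language. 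For the converse the paper argues dually: the commutative regular $L'\in\mathcal V_{A\times\mathcal D}$ is written as a Boolean combination of $h_i^{-1}(L''_i)$ with $L''_i\in\mathcal V_{\{0,1\}}$ and (again general) morphisms $h_i$, one gate $g_i$ is built per $L''_i$, and position $j$ contributes the string $h_i((\tau_{\mathcal D}(w))_j)$ to $g_i$. This subsumes your per-$(a,d)$ scheme and makes the fan-in bookkeeping automatic. Once you allow non-length-preserving morphisms, the remaining steps really are routine.
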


\begin{proof}\label{proof:taucircuit}
Let $L \subseteq A^*$ be a language recognized by a constant size circuit. Then $L$ is recognized by Lemma \ref{lem:circuitlemma} by a Boolean combination of depth 1 circuits with gates from $\mathcal V_{\{0,1\}}$. As $(\mathcal V\block\Parb)_A$ is a Boolean algebra it suffices to show that every language recognized by a depth 1 circuit with gates from $\mathcal V$ is in $(\mathcal V\block\Parb)_A$. So we assume that $L$ is recognized by a circuit of depth 1 which is just a single gate. Let $L'$ be the language corresponding to the function computed by this single gate. As the functions of the gates are symmetric we can assume that the gate queries the inputs in order. Also as $L'$ is regular the multiplicity of edges querying the same input position to the gate is limited by some constant. Let $c$ be this constant. Hence we can upper bound the different ways an input position is wired to this gate by $(2^{|A|})^c$. Let $\mathcal D$ be a partition where the equivalence classes correspond to the different ways the input can be wired. We define a morphism $h \colon (A\times\mathcal D)\rightarrow \{0,1\}^*$ where each $(a,P)$ is mapped to the way an input in the equivalence class $P$ reading the letter $a$ as input would influence the gate. As $\mathcal V$ is a variety $L''=h^{-1}(L')$ is in $\mathcal V_{(A \times \mathcal D)}$. But then $\tau_\mathcal D^{-1}(L'')=L$.

For the other direction as constant size circuits over the base generated by $\mathcal V$ are closed under Boolean combinations it suffices to show that any language $L \subseteq A^*$ with $L=\tau_\mathcal D^{-1}(L')$ and $L'\in\mathcal V_{(A \times \mathcal D)}$ is recognized by a constant size circuit. As $L'\in\mathcal V_{(A \times \mathcal D)}$ is a symmetric regular language it is a Boolean combination of languages $L'_1,\dots,L'_k\in\mathcal V_{(A \times \mathcal D)}$, such that there exists a morphism $h_i \colon (A \times \mathcal D)^* \rightarrow \{0,1\}^*$ and $L'_i=h_i^{-1}(L''_i)$, where $L''_i \in \mathcal V_{\{0,1\}}$. Fix an input length $n$. We construct a circuit that consists of exactly this Boolean combination of gates $g_1,\dots,g_k$ computing the functions corresponding to $L''_1,\dots,L''_k$. We wire each input to the gates such that for a word $w \in A^n$ each input position $j$ with $w_j=a$ contributes to the gate $g_i$ the value $h_i((\tau_\mathcal D(w))_j)$ and some neutral string otherwise. Please note that this definition of the wires is only well defined as $(\tau_\mathcal D(w))_j$ does only depend on the $w_j$, the length of $w$ and the position $j$, but not on the other letters. 
This completes our construction of the circuit.
\end{proof}

\section{Duality and the block product}\label{sec:duality}

We briefly introduce some theory from Stone Duality, which will be used to characterize the classes of languages we are interested in and to obtain separation results for them. A short introduction on the idea behind the theory is stated here.

By duality, each Boolean algebra $\B$ has an associated compact space, called its Stone Space $\mathcal S ( \B )$. For any two Boolean algebras $\B$ and $\mathcal C$, if $\mathcal C$ is a subalgebra of $\B$, a relation between the Stone spaces exists, namely $\mathcal S (\mathcal C)$ is a quotient of $\mathcal S (\B)$. Since every Boolean algebra of languages is a subalgebra of the powerset of $A^*$, there always exists a canonical projection from the Stone space of $P(A^*)$ to the Stone space of any Boolean algebra of languages over $A^*$. The idea is to characterize the Boolean algebra of languages $\B$ by the kernel of said projection, that is finding all pairs of elements in the Stone space of $P(A^*)$ that get identified in $\mathcal S(\B)$.

In order to define the points of the Stone space we need to define the notion of an ultrafilter. 

\begin{definition}[(Ultra)Filter]
Let $\B$ be a Boolean algebra. An \textit{filter} of $\B$ is a non-empty subset $\gamma$ of $\B$ that satisfies
\begin{enumerate}
\item $\emptyset \notin \gamma$,
\item if $L \in \gamma$ and $K \supseteq L$, then $K \in \gamma$, \hfill ($\gamma$ is closed under extension)
\item if $L,K \in \gamma$ then $K \cap L \in \gamma$, \hfill ($\gamma$ is closed under finite intersections)
\end{enumerate}
The filter is called {\em ultrafilter} if it additionally satisfies 
\begin{enumerate}
\setcounter{enumi}{3}
\item for each $L \in \B$, either $L \in \gamma$ or $L^c \in \gamma$. \hfill (ultrafilter condition)
\end{enumerate}
\end{definition}

\begin{definition}[Stone Space]
Let $\mathcal B$ be a Boolean algebra. The Stone space $\mathcal S( \mathcal B)$ of $\B$ is the space of all ultrafilters of $\mathcal B$ equipped with the topology generated by the sets $\widehat{L} = \{\gamma \in \mathcal{S}(\B) \mid L \in \gamma \}$ for $L \in \B$.
\end{definition}

The topology that the Stone spaces are equipped with is of importance, since it holds informations about the languages in the underlying Boolean algebra. For those familiar with topology: The clopen sets of $\mathcal S( \B)$ are exactly the topological closures of the sets $L\in\mathcal B$.

The Stone Space of the full Boolean algebra $P(A^*)$ is a special case, also known as the Stone-\v{C}ech compactification of $A^*$, which is denoted by  $\beta (A^*)=\mathcal S ( \mathcal P (A^*))$. For an algebra $\mathcal B\subseteq P(A^*)$, we denote the canonical projection from $\beta(A^*)$ onto $\mathcal{S}(\B)$ by $\pi_\B$. 
Let $A^*$ and $B^*$ be two free monoids and $f: A^* \rightarrow B^*$ be a function. Then there exists a unique continuous extension $\beta f: \beta A^* \rightarrow \beta B^*$, which is defined by $$L \in \beta f(\gamma) \Leftrightarrow f^{-1}(L) \in \gamma.$$ See \cite{something}.

\begin{definition}[Equation]
An ultrafilter equation is a tuple $(\mu,\nu) \in \beta A^* \times \beta A^*$. Let $\B$ be a boolean algebra. We say that $\B$ satisfies the equation $(\mu,\nu)$ if $\pi_\B(\mu)=\pi_\B(\nu)$. With respect to some Boolean algebra $\B$ we say that $[\mu \leftrightarrow \nu]$ holds.
\end{definition}

\begin{lemma}\label{lem:projection}
Let $\B$ be a subalgebra of $P(A^*)$.
For $\mu,\nu\in\beta A^*$ we have $\pi_\B(\mu)=\pi_\B(\nu)$ iff for all $L \in \B$ the equivalence
$L \in \mu \Leftrightarrow L \in \nu$
holds. 
\end{lemma}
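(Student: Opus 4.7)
The plan is to unpack the definition of the canonical projection $\pi_\B$: by Stone duality applied to the inclusion $\B \hookrightarrow \mathcal P(A^*)$, the dual map $\pi_\B \colon \beta A^* \to \mathcal S(\B)$ acts by trace, i.e.\ $\pi_\B(\mu) = \mu \cap \B$ for every $\mu \in \beta A^*$. Once this identification is in hand, the equivalence claimed in the statement is just the trivial fact that two subsets of $\B$ are equal iff they have the same elements.

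First I would verify that $\mu \cap \B$ really is an ultrafilter of $\B$. Conditions $1$--$3$ of the ultrafilter definition are inherited directly from $\mu$, using only that $\B$ is closed under finite intersections and that taking supersets within $\B$ is a special case of taking supersets within $\mathcal P(A^*)$. For the ultrafilter condition, given $L \in \B$ one also has $L^c \in \B$ since $\B$ is a Boolean subalgebra, and $\mu$ as an ultrafilter of $\mathcal P(A^*)$ contains exactly one of $L$ and $L^c$; hence the same holds for $\mu \cap \B$. So the trace yields a well-defined map $\beta A^* \to \mathcal S(\B)$. To see that this trace map is indeed the canonical projection dual to the inclusion, I would compute preimages of basic clopens: for $L \in \B$, $\{\mu \in \beta A^* \mid L \in \mu \cap \B\} = \{\mu \in \beta A^* \mid L \in \mu\}$, which is precisely the basic clopen $\widehat L$ viewed in $\beta A^*$. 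Thus the trace map is continuous and, by uniqueness of the continuous extension coming from Stone duality, coincides with $\pi_\B$.

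With $\pi_\B(\mu) = \mu \cap \B$ established, the equivalence of the lemma is immediate: $\pi_\B(\mu) = \pi_\B(\nu)$ is literally $\mu \cap \B = \nu \cap \B$, which by definition of set equality says that for every $L \in \B$ one has $L \in \mu$ iff $L \in \nu$. There is no real obstacle; the entire content of the lemma is the identification of $\pi_\B$ with the trace map, which is the standard manifestation of Stone duality on the inclusion $\B \hookrightarrow \mathcal P(A^*)$, and the remaining step is purely set-theoretic.
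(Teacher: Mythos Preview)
Your proposal is correct and follows exactly the paper's approach: the paper's proof is the single sentence ``The projection $\pi_\B$ is given by $\pi_\B(\mu) = \{L \in \mu \mid L \in \B\}$ and thus the equivalence holds,'' which is precisely your identification $\pi_\B(\mu) = \mu \cap \B$ followed by the set-theoretic observation. You have simply supplied the verification (that the trace is an ultrafilter of $\B$ and agrees with the dual of the inclusion) that the paper leaves implicit.
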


\begin{proof}
The projection $\pi_\B$ is given by $\pi_\B(\mu) = \{L \in \mu \mid L \in \B\}$ and thus the equivalence holds.
\end{proof}

Recently, Gehrke, Grigorieff and Pin \cite{GGP} were able to show that any Boolean algebra of languages can be defined by a set of equations of the form $[\mu \leftrightarrow \nu]$, where $\mu$ and $\nu$ are ultrafilters on the set of words. That is $L \in \B$ if and only if for all equations $[\mu \leftrightarrow \nu]$ of $\B$ the equivalence
$L \in \mu \Leftrightarrow L \in \nu$
holds.
We say a set of equations is sound, if all $L$ in $\B$ satisfy the equivalence above and complete, if a language in $A^*$ satisfying all equations is in $\B$. 

This theorem provides us with the existence of ultrafilter equations for $(\BlockP)_{A}$. However, it does not answer the question on how to obtain them. The following lemma provides us with a set of equations that define precisely the kernel of the projection  $\pi_{(\mathcal{V} \block \Parb)_{A}}$. It builds on the knowledge, that $(\BlockP)_{A}$ was defined by the functions $\tau_{\mathcal D}$.

\begin{lemma}\label{lem:equations}
Let $\mu,\nu \in \beta A^*$. Then for each partition $\mathcal D$ of $\N^2$ the Boolean algebra $\mathcal V_{(A \times \mathcal D)}$ satisfies the equation
$[\beta \tau_{\mathcal D}(\mu) \leftrightarrow \beta \tau_{\mathcal D}(\nu)]$
if and only if $[\mu \leftrightarrow \nu]$ is an equation of $(\BlockP)_{A}$.
\end{lemma}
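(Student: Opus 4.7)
The plan is to unfold both sides of the biconditional into concrete membership conditions for $\mu$ and $\nu$, and bridge them via the adjunction identity $L' \in \beta\tau_{\mathcal D}(\gamma) \iff \tau_{\mathcal D}^{-1}(L') \in \gamma$ that defines the continuous extension $\beta\tau_{\mathcal D}$. Combined with Lemma~\ref{lem:projection}, this turns the statement ``$\mathcal V_{(A \times \mathcal D)}$ satisfies $[\beta\tau_{\mathcal D}(\mu) \leftrightarrow \beta\tau_{\mathcal D}(\nu)]$'' into
$$\tau_{\mathcal D}^{-1}(L') \in \mu \iff \tau_{\mathcal D}^{-1}(L') \in \nu \qquad \text{for every } L' \in \mathcal V_{(A \times \mathcal D)},$$
while ``$[\mu \leftrightarrow \nu]$ is an equation of $(\BlockP)_A$'' unfolds to $L \in \mu \iff L \in \nu$ for every $L \in (\BlockP)_A$.

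The ``only if'' direction is then immediate: by the definition of $\BlockP$, every generator $\tau_{\mathcal D}^{-1}(L')$ with $L' \in \mathcal V_{(A \times \mathcal D)}$ belongs to $(\BlockP)_A$, so the premise applied to $L = \tau_{\mathcal D}^{-1}(L')$ gives the rewritten $(A \times \mathcal D)$-equation through the adjunction above.

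For the ``if'' direction, the hypothesis gives $\tau_{\mathcal D}^{-1}(L') \in \mu \iff \tau_{\mathcal D}^{-1}(L') \in \nu$ on every generator of $(\BlockP)_A$, and I need to lift this equivalence to all of $(\BlockP)_A$, whose members are arbitrary Boolean combinations of such generators across possibly different partitions $\mathcal D$. The standard trick is to observe that the family $\{X \subseteq A^* \mid X \in \mu \iff X \in \nu\}$ is a Boolean subalgebra of $P(A^*)$, because ultrafilters commute with complement, finite intersection, and finite union. Since this subalgebra contains all generators of $(\BlockP)_A$, it contains $(\BlockP)_A$ itself, yielding $[\mu \leftrightarrow \nu]$ as an equation of $(\BlockP)_A$ via Lemma~\ref{lem:projection}. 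I expect this closure-under-Boolean-operations step to be the only mildly delicate part, and it is routine once one notices that the argument does not require all generators to come from a single partition $\mathcal D$: ultrafilter membership handles arbitrary Boolean combinations pointwise regardless of which $\mathcal D$ produced each generator.
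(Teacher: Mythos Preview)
Your proof is correct and follows essentially the same route as the paper, which also reduces both directions to the adjunction $L' \in \beta\tau_{\mathcal D}(\gamma) \Leftrightarrow \tau_{\mathcal D}^{-1}(L') \in \gamma$ on generators of $(\BlockP)_A$; you are in fact more explicit than the paper in justifying the passage from generators to arbitrary Boolean combinations via the subalgebra $\{X : X \in \mu \Leftrightarrow X \in \nu\}$. One cosmetic point: your ``if'' and ``only if'' labels are swapped---the direction requiring the Boolean-closure argument is the \emph{only if} (from the $\beta\tau_{\mathcal D}$-equations to the $(\BlockP)_A$-equation), while the immediate direction is the \emph{if}.
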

\begin{proof}
Let $\mu,\nu \in \beta A^*$ such that $[\beta \tau_{\mathcal D}(\mu) \leftrightarrow \beta \tau_{\mathcal D}(\nu)]$ holds for all partitions $\mathcal D$ of $\N^2$ and let $L \in \BlockPA$ be a generator of the Boolean algebra. Recall that by definition there exists a partition $\mathcal D$ of $\N^2$ and a language $S \in \mathcal V_{(A \times \mathcal D)}$ such that $L = \tau_{\mathcal D}^{-1}(S)$. Then

$$ L \in \mu \Leftrightarrow  \tau_{\mathcal D}^{-1}(S) \in \mu \Leftrightarrow  S \in \beta \tau_{\mathcal D}(\mu) \Leftrightarrow  S \in \beta \tau_{\mathcal D}(\nu) \Leftrightarrow  \tau_{\mathcal D}^{-1}(S) \in \mu \Leftrightarrow  L \in \nu.$$

\noindent This proves both directions of the claim.

\end{proof}

This set of equations already provides us with a full characterization of $(\BlockP)_{A}$, but we are interested in a set that satisfies conditions that are easier to check and still is sound and complete.

\section{Equations for the block product}\label{sec:equations}
In this chapter we find a set of equations that holds for $(\BlockP)_{A}$, depending on the equations that define the variety $\mathcal V$ of the gate types, which in our case is regular and commutative. As a corollary, we expose separation results for a selection of classes, to demonstrate the applicability of the equations.

As we describe the base of the circuits by a variety $\mathcal V$ of regular languages, we use a description that has already been applied in the regular case. Thereto, we introduce the notion of identities of profinite words. Here, we define a profinite word as an ultrafilter on the regular languages.
The combined results of Reiterman \cite{reitermanTheorem} and Eilenberg \cite{Eilenberg:1976:ALM:540244} state that for each variety of regular languages $\mathcal V$, there is a set of profinite identities, defining the variety. Informally speaking, an identity is an equation that holds not only for a language, but also for all quotients of a language.

As such, we can define the notion of profinite identities in the following way: A Boolean algebra of regular languages $\B$ satisfies the profinite identity $[u = v]$, where $u,v \in \mathcal S(\Reg)$ instead of $u,v \in \beta(A^*)$, if for all $L \in \B$ the equivalence
$x^{-1}Ly^{-1} \in u \Leftrightarrow x^{-1}Ly^{-1} \in v$
for all $xy \in A^*$ holds.
As varieties are closed under quotients, it suffices to consider $L \in u \Leftrightarrow L \in v$ for all $L \in \B$.

To define the equations that hold for $(\BlockP)_A$, we define a function, that gets as arguments a word $w$, another word $s$ and a vector of positions $p$, such that the positions of $p$ in $w$ are substituted by the letters of $s$. Naturally, such a substitution only makes sense, if the input is restricted to be reasonable. For instance the positions in $p$ should not exceed the length of the word $w$. For technical reasons, each element of the vector of positions will be a tuple containing the distance of the position from the beginning and the end of the word. 

For an element $p \in \N^2$ denote by $p^1$ the first and by $p^2$ the second component of $p$, i.e. $p = (p^1,p^2)$.
We define the set of correct substitutions
$$
\mathfrak{D} = \left\{ (\w,\s,\p) \in A^* \times A^* \times (\N^2)^* \left\vert \parbox[c]{3.5cm}{$\left|\s\right|=\left|\p\right| \\ \forall i \colon \left| w \right| -1 = p_i^1 + p_i^2 \\ \p^1_0<\ldots<\p^1_{\left|\p\right|-1}<\left|\w\right|$}\right. \right\}.
$$
Given a word $w = w_0 \ldots w_{m-1} \in A^*$ of length $m$ and $k,l$ with $0 \leq k \leq l < m$, we define $w[k,l] = w_k \ldots w_l$.
Let $n$ be the length of $s$, then the function $f: A^* \times A^* \times (\N^2)^* \rightarrow A^*$ is defined as
\[
f(\w,\s,\p) = 
\begin{cases}
\w[0,\p^1_0-1] \s_0 \w[\p^1_0+1,\p^1_{1}-1]\s_1 \ldots \s_{n-1} \w[\p^1_{n-1}+1,m-1] & \text{if $(\w,\s,\p) \in \mathfrak{D}$,}\\
w & \text{otherwise.} 
\end{cases}
\]

\noindent Furthermore, define the function that maps the second component to its length as
\begin{align*}
\lambda \colon A^* \times A^* \times (\N^2)^* & \rightarrow A^* \times \N \times \N^*\\
(w,s,p) & \mapsto (w, \left|s \right|, p)
\end{align*}
and let $\pi_2 \colon A^* \times A^* \times (\N^2)^* \rightarrow A^*$ with $\pi_2(w,s,p) = s$ be the projection on the second component.

As the function $f$ substitutes letters in certain positions, we need the following definition in order to define which positions of $p$ are ``indistinguishable'' by a language in $(\BlockP)_A$.

Define the mapping $\contPos \colon A^* \times A^* \times (\N^2)^* \rightarrow \mathcal{P}(\N^2)$ that maps the third component onto its content, given by $\contPos(w,s,p) = \{p_0,\ldots,p_{\left|p\right|}\}$.
Note that any finite subset of $\N^2$ is a finite subset of $\beta(\N^2)$ and thus $\pi_c$ can be interpreted as a mapping into the space $\mathcal F(\N^2)$ of all filters of $\mathcal{P}(\N^2)$, by sending it to the intersection of all ultrafilters containing the set.
Furthermore, $\beta (\N^2)$, which contains all ultrafilters of $\mathcal{P}(\N^2)$ can be seen as a subspace of $\mathcal{F}(\N^2)$, which is homeomorphic to Vietoris of $\beta(\N^2)$.
Then there exists and extension of $\pi_c$ denoted by $\beta \pi_c$, known as the Stone-\v{C}ech extension $\beta \pi_c \colon \beta(A^* \times A^* \times (\N^2)^*) \rightarrow \mathcal{F}(\N^2)$.

Together with these definitions we can formulate the theorem that provides us with a set of equations for $(\BlockP)_A$.

\begin{theorem}\label{thm:main}
The variety $\BlockPA$ is defined by the equations
\[[\beta f(\gammaU) \leftrightarrow \beta f(\gammaV)].\]
where $[u = v]$ is a profinite equation that holds on $\mathcal V$ and $\gammaU,\gammaV \in \beta(A^* \times A^* \times (\N^2)^*)$ satisfying
\begin{enumerate}
\renewcommand{\labelenumi}{\em(\arabic{enumi})}
\setlength{\itemindent}{8pt}
\item \label{cond1} $\beta \lengthPres(\gammaU) = \beta \lengthPres (\gammaV)$
\item \label{cond2} $u \subseteq \beta \pSub(\gammaU) \text{ and } v \subseteq \beta \pSub(\gammaV)$
\item \label{cond3} $\beta \contPos (\gammaU) = \beta \contPos (\gammaV) \in \beta(\N^2)$
\end{enumerate}
\end{theorem}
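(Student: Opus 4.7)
I would split the proof into \emph{soundness} (every $L \in \BlockPA$ satisfies each theorem equation) and \emph{completeness} (every $L$ satisfying all theorem equations lies in $\BlockPA$), following the paper's stated outline of Sections \ref{sec:soundness} and \ref{sec:completeness}.

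For soundness, the plan is to reduce to the generators of $\BlockPA$. By Lemma \ref{lem:taucircuit} every $L \in \BlockPA$ is a Boolean combination of languages $\tau_{\mathcal D}^{-1}(S)$ with $S \in \mathcal V_{A \times \mathcal D}$, and since satisfaction of a single ultrafilter equation is preserved by Boolean operations it suffices to treat $L = \tau_{\mathcal D}^{-1}(S)$. Unfolding, $L \in \beta f(\gamma) \Leftrightarrow (\tau_{\mathcal D} \circ f)^{-1}(S) \in \gamma$. The key steps are: (i) by condition (3) and Lemma \ref{lem:partitionimappendix}, the common ultrafilter $\beta \pi_c(\gammaU) = \beta \pi_c(\gammaV) \in \beta(\N^2)$ selects a unique equivalence class $D_0 \in \mathcal D$, so both $\gammaU, \gammaV$ concentrate on triples whose substituted positions lie in $D_0$; (ii) on such triples, $\tau_{\mathcal D}(f(w,s,p))$ decomposes in the commutative monoid $(A \times \mathcal D)^*$ as a product of a context $c(w,p)$ coming from the unsubstituted positions and the image $\varphi_{D_0}(s)$, where $\varphi_{D_0}\colon A^* \to (A \times \mathcal D)^*$ is the morphism $a \mapsto (a, D_0)$; (iii) commutativity of $S$ gives $\tau_{\mathcal D}(f(w,s,p)) \in S \Leftrightarrow s \in T(w,p)$ with $T(w,p) := \varphi_{D_0}^{-1}(c(w,p)^{-1} S) \in \mathcal V_A$, and regularity of $S$ forces these $T(w,p)$ to take finitely many values, inducing a finite partition of $A^* \times (\N^2)^*$; (iv) condition (1) guarantees $\gammaU, \gammaV$ select the same cell, hence the same $T_{i_0} \in \mathcal V_A$; (v) the profinite identity $[u = v]$ gives $T_{i_0} \in u \Leftrightarrow T_{i_0} \in v$, and condition (2) lifts this (via a case split on whether $T_{i_0}$ or its complement lies in $u$) to $\pi_2^{-1}(T_{i_0})$ having the same membership in $\gammaU$ and $\gammaV$. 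Together these yield $L \in \beta f(\gammaU) \Leftrightarrow L \in \beta f(\gammaV)$.

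For completeness, I would argue contrapositively. Suppose $L \notin \BlockPA$. By Lemma \ref{lem:equations}, there exist $\mu, \nu \in \beta A^*$ with $L \in \mu$ but $L \notin \nu$ (say) such that $[\beta \tau_{\mathcal D}(\mu) \leftrightarrow \beta \tau_{\mathcal D}(\nu)]$ holds on $\mathcal V_{A \times \mathcal D}$ for every partition $\mathcal D$. The plan is to construct $\gammaU, \gammaV \in \beta(A^* \times A^* \times (\N^2)^*)$ and a profinite identity $[u = v]$ of $\mathcal V$ satisfying conditions (1)--(3) with $\beta f(\gammaU) = \mu$ and $\beta f(\gammaV) = \nu$; the resulting theorem equation is then violated by $L$. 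For $u, v$ I take the restrictions $u := \pi_{\Reg}(\mu)$ and $v := \pi_{\Reg}(\nu)$: instantiating the hypothesis at the trivial partition $\mathcal D = \{\N^2\}$ gives $\pi_{\mathcal V_A}(\mu) = \pi_{\mathcal V_A}(\nu)$, so $[u = v]$ is a profinite identity of $\mathcal V$. For $\gammaU, \gammaV$, I would build filterbases on $A^* \times A^* \times (\N^2)^*$ with a common $(w, p)$-marginal (ensuring (1)), with $s$-marginals extending $u$ and $v$ (ensuring (2)), and with $p$ chosen by a positioning scheme whose $\pi_c$-content converges in $\vietoris(\beta(\N^2))$ to a single ultrafilter of $\N^2$ (ensuring (3)). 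Zorn's Lemma extends the filterbases to ultrafilters; the finite intersection property is supplied by instantiating the hypothesis at progressively refined partitions $\mathcal D$.

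The main obstacle is the completeness construction: simultaneously satisfying $\beta f(\gammaU) = \mu, \beta f(\gammaV) = \nu$ alongside all three conditions. Condition (1) demands that the $(w, |s|, p)$-marginals of $\gammaU, \gammaV$ coincide exactly, yet $\mu, \nu$ may differ on Boolean combinations outside $\mathcal V_A$; the entire discrepancy must be carried by the $s$-coordinate while $\beta f$ still recovers $\mu, \nu$. Condition (3) further forces $\pi_c$ to concentrate on a single ultrafilter of $\N^2$, which rules out naive positioning schemes such as $p = \text{all positions}$ (whose anti-diagonals fail the ultrafilter property already on simple sets like $\{(0,n)\}$). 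My plan to reconcile these is to exploit commutativity of $\mathcal V$: letters may be shifted between the context $c(w,p)$ and $\varphi_{D_0}(s)$ without affecting membership in any $S \in \mathcal V_{A \times \mathcal D}$, and this freedom is what lets the filterbase remain consistent under a carefully chosen positioning scheme. The hypothesis $[\beta \tau_{\mathcal D}(\mu) \leftrightarrow \beta \tau_{\mathcal D}(\nu)]$ across all $\mathcal D$ supplies the exact coherence needed, with the selected $D_0$ playing the role of the Vietoris-space limit point for $\beta \pi_c$.
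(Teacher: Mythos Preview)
Your soundness sketch is essentially the paper's argument, reorganised. The paper also reduces to generators $\tau_{\mathcal D}^{-1}(S)$, uses commutativity of $S$ to separate the substituted part $h_P(s)$ from the remaining context, collapses the finitely many quotient-types into a single partition element $E_{u,v}$ that both $u$ and $v$ contain, and then invokes conditions (1) and (3) to show the remaining set is a $\lambda$-preimage intersected with a $\pi_c$-preimage. Your steps (i)--(v) hit the same points in a different order. One detail you skipped: the paper explicitly treats the case $(w,s,p)\in\mathfrak{D}^c$ (where $f$ degenerates to the first projection), which is easy but needed since $\mathfrak{D}^c$ may lie in the filters.

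Your completeness plan, however, has a real gap and diverges from the paper in a way that matters. You start from $\mu,\nu\in\beta A^*$ given by Lemma~\ref{lem:equations} and try to lift them to $\gamma_u,\gamma_v$ with $\beta f(\gamma_u)=\mu$, $\beta f(\gamma_v)=\nu$. The problem is exactly the one you flag but do not resolve: condition~(3) forces $\beta\pi_c(\gamma_u)=\beta\pi_c(\gamma_v)$ to be a \emph{single} point $\alpha\in\beta(\N^2)$, meaning all substituted positions must lie in one block of every finite partition of $\N^2$. Nothing about $\mu,\nu$ tells you which $\alpha$ to pick, and for generic $\mu,\nu$ there is no reason such an $\alpha$ exists making the filterbases consistent with $\beta f(\gamma_u)=\mu$. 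Your appeal to commutativity and to ``the selected $D_0$'' is circular: $D_0$ depends on a chosen partition $\mathcal D$, but you need compatibility across all partitions simultaneously.

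The paper circumvents this entirely. It never fixes $\mu,\nu$ in advance. Instead it argues directly: assuming $L$ satisfies all the theorem equations, it first uses the equations $\mathcal E_{[ab=ba]}$ (Lemma~\ref{lemma:finIndex}) to produce a finite-index equivalence relation $\theta\subseteq R_L$ on $\N^2$. Each infinite $\theta$-class $P$ is then the natural candidate for the set on which $\alpha$ concentrates; Lemma~\ref{lemma:exPhn} builds $\gamma_u,\gamma_v$ with positions in $P$ by a concrete sequence construction (words $s_n,t_n$, profinite limits $u,v$ of $s_n[P],t_n[P]$), and a contraposition shows that a suitable morphism $h$ into $\mathbf V$ governs membership in $L$. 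The key idea you are missing is that the commutativity equations are used \emph{first}, to manufacture the partition of $\N^2$; only then does one build the ultrafilters, class by class. Trying to read the partition off an abstract pair $\mu,\nu$ is the wrong direction.
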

\begin{proof}
We prove soundness of these equations in Section \ref{sec:soundness} and completeness in Section \ref{sec:completeness}
\end{proof}

While the following separation result itself is not surprising, and strong separations are known, the proof method has the advantage that there is no need for probabilistic methods to find specific inputs for the circuits to be fixed or swapped.

For a fixed $x \in A^*$ define the profinite word, also denoted by $x$ as
$$x = \{L \in \Reg \mid x \in L\}$$
and $x^{\omega}$ as
$$x^{\omega} = \{L \in \Reg \mid \exists n_0 \in \N \quad \forall n \geq n_0 \quad \colon \quad x^{n!} \in L\}$$

The following varieties are used for characterization of $\BlockP$ by equations.
\begin{figure}[h!]
  \begin{tabular}{lll}
  \toprule
   Gates & Profinite Identities & Circuit Class (constant size) \\ 
\midrule
   $\{\land,\lor\}$ & $xy = yx \quad x^2y = xy^2$ & $\ACzero$ \\
   $\{\land,\lor,\modp \mid p\in\N\}$ & $xy = yx$  & $\ACCzero$ \\ 
    $\{\modp \mid p\in\N\}$ & $xy = yx \quad x^{\omega} = y^\omega$ & $\CCzero$ \\
\bottomrule
  \end{tabular}
\caption{Varieties defining the gate types and the defining profinite identities.}\label{fig:varieties}
\end{figure}

\begin{corollary}
Constant size $\mathbf{CC}^0$ is strictly contained in constant size $\mathbf{ACC}^0$ and constant size $\mathbf{AC}^0$ is strictly contained in constant size $\mathbf{ACC}^0$.
Also constant size $\mathbf{CC}^0$ and constant size $\mathbf{AC}^0$ are not comparable.
\end{corollary}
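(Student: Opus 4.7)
My plan is to apply Theorem~\ref{thm:main} with the profinite identities listed in Figure~\ref{fig:varieties}, and to exhibit for each pair of classes a concrete witness language that violates an equation of the other class.

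First, I would separate constant size $\mathbf{CC}^0$ from constant size $\mathbf{AC}^0$ by exhibiting the language $L_1 = 1^* \subseteq \{0,1\}^*$, which is in constant size $\mathbf{AC}^0$ via a single $\land$-gate reading all inputs. To show $L_1 \notin \mathbf{CC}^0$, I use the profinite identity $x^\omega = y^\omega$ of the gate-type variety, instantiated at $x = 1$, $y = 0$, so that $u = 1^\omega$ and $v = 0^\omega$. I build triples $(w_n, s_n, p_n)$ in which $w_n$ is any fixed word of length $n$, $p_n$ lists all positions $(i, n-1-i)$ for $i < n$, and $s_n = 1^n$ on the $\gammaU$-side while $s_n = 0^n$ on the $\gammaV$-side. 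Taking an ultrafilter limit along a free ultrafilter on $\N$ concentrated at $n = m!$ yields $\gammaU, \gammaV \in \beta(A^* \times A^* \times (\N^2)^*)$. Conditions \emph{(1)}--\emph{(3)} of Theorem~\ref{thm:main} hold because the $\lambda$-projections agree, the position sequences are identical, and the $\pi_2$-projections extend $1^\omega$ and $0^\omega$ respectively. The theorem therefore forces $[\beta f(\gammaU) \leftrightarrow \beta f(\gammaV)]$ on constant size $\mathbf{CC}^0$; but $f(w_n, 1^n, p_n) = 1^n \in L_1$ whereas $f(w_n, 0^n, p_n) = 0^n \notin L_1$, so $L_1$ distinguishes the two sides, a contradiction.

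Next, I would separate constant size $\mathbf{AC}^0$ from constant size $\mathbf{CC}^0$ using the parity language $L_2 = \{w \in \{0,1\}^* \mid \#_1(w) \equiv 0 \pmod{2}\}$, which is in constant size $\mathbf{CC}^0$ via a single $\mathrm{mod}_2$-gate. To show $L_2 \notin \mathbf{AC}^0$, I use the profinite identity $x^2 y = x y^2$ of the semilattice variety for $\land, \lor$-gates, specialised at $x = 1$, $y = 0$, so that $u = 110$ and $v = 100$. I construct $\gammaU$ and $\gammaV$ as ultrafilter limits of triples whose $\pi_2$-coordinate is $110$ and $100$ at three designated substitution positions, and whose padding $w_n$ is chosen so that the parities of $f(\gammaU_n)$ and $f(\gammaV_n)$ differ. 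Conditions \emph{(1)}--\emph{(3)} again hold: the $s$-lengths are both $3$, the position vectors coincide, and the $\pi_2$-projections are the principal ultrafilters $110$ and $100$. Theorem~\ref{thm:main} now forces the equation on constant size $\mathbf{AC}^0$, but $L_2$ distinguishes the two sides since their $\#_1$-counts differ by exactly one. Together these two witnesses give the incomparability of constant size $\mathbf{AC}^0$ and $\mathbf{CC}^0$; and since constant size $\mathbf{ACC}^0$ contains all three gate types, both $L_1$ and $L_2$ lie in it, so the two strict inclusions into $\mathbf{ACC}^0$ follow immediately.

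The main technical step is choosing, for each identity of $\mathcal V$, triples $(w_n, s_n, p_n)$ whose principal ultrafilters accumulate to $\gammaU, \gammaV$ with matching $\lambda$- and $\pi_c$-projections and with $\pi_2$-projections converging to the profinite words $u, v$ instantiating the identity; once this is done, checking conditions \emph{(1)}--\emph{(3)} of Theorem~\ref{thm:main} and reading off the contradiction from the membership test of the witness language in $\beta f(\gammaU)$ versus $\beta f(\gammaV)$ is routine.
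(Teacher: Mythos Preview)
Your overall strategy matches the paper's exactly: same witness languages ($1^*$ for $\CCzero$, parity for $\ACzero$), same profinite identities ($0^\omega=1^\omega$ and $110=100$), and the same appeal to Theorem~\ref{thm:main}. The gap is in condition~\cond{3}. You verify only that $\beta\contPos(\gammaU)=\beta\contPos(\gammaV)$ (``the position sequences are identical''), but condition~\cond{3} also demands that this common value lie in $\beta(\N^2)$, i.e.\ be an \emph{ultrafilter}, not merely a filter on $\mathcal P(\N^2)$. By Theorem~\ref{theorem:contPos} this is equivalent to requiring that for every finite partition $\{P_1,\dots,P_k\}$ of $\N^2$ the set $\bigcup_i A^*\times A^*\times P_i^*$ belong to $\gamma$. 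Your $\CCzero$ construction, with $p_n$ listing \emph{all} $n$ positions $(i,n{-}1{-}i)$, violates this already for the partition by parity of the first coordinate: for every $n\ge 2$ the content of $p_n$ meets both classes, so the sequence $\{(w_n,1^n,p_n)\}$ is disjoint from $\bigcup_i A^*\times A^*\times P_i^*$, and any ultrafilter containing that sequence must exclude the union. The three-position $\ACzero$ construction has the same defect: no fixed choice of three positions per $n$ can land in a single class of \emph{every} finite partition of $\N^2$.

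The paper avoids this by \emph{not} fixing a concrete position sequence. It builds $\gammaU,\gammaV$ from filterbases that explicitly contain the family $\mathcal F_2=\{\bigcup_i A^*\times\N\times P_i^*\mid \{P_i\}\text{ a partition of }\N^2\}$ (via Lemmata~\ref{lemma:pullback}, \ref{lemma:addPullback} and Theorem~\ref{theorem:contPos}), and then adds the sets $1^*\times\{1^{n!}:n\in\N\}\times(\N^2)^*$, resp.\ $0^*\times\{110\}\times(\N^2)^*$, leaving the third coordinate unconstrained so the ultrafilter can ``choose'' positions compatibly with every partition. Your sequence-limit construction is too rigid for this; to repair it you must abandon the explicit $p_n$ and work with the filterbase $\mathcal F_2$ as the paper does.
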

\begin{proof}
We show this by proving $L_{\text{AND}}=1^*$ is not contained in constant size $\CCzero$ and parity $L_{\text{PARITY}}=(0^*10^*1)^*0^*$ is not contained in constant size $\ACzero$.

For that we construct ultrafilters from filterbases, using Lemmata \ref{lemma:pullback}, \ref{lemma:addPullback} and \ref{theorem:contPos} from Section \ref{sec:completeness}, such that they satisfy conditions 1.-3. from Theorem \ref{thm:main}.

Take the identity $[0^\omega=1^\omega]$, that holds for the variety providing us with the gate types of $\CCzero$. By Theorem \ref{thm:main}, we know that for any two ultrafilters $\gamma_{0^{\omega}}$ and $\gamma_{1^{\omega}}$ satisfying the conditions of the Theorem, provide us with an equation $[\beta f(\gamma_{0^\omega}) \leftrightarrow \beta f(\gamma_{1^\omega})]$. In constructing two such filters, such that $L_{\text{AND}}$ is contained in $\beta f(\gamma_{0^\omega})$, but not in $\beta f(\gamma_{0^\omega})$, we prove that it is not an element of constant size $\CCzero$.
Consider the filter base
$$\mathcal F_1 = \{A^* \times \{n! \mid n \geq N\} \times (\N^2)^* \mid N \in \N\}$$
and the second base
$$\mathcal F_2 = \{\bigcup_{i=0}^n A^* \times \N \times P_i^* \mid \{P_0,\ldots,P_n\} \text{ is a partition of } \N^2\}.$$
Adding the two together yields another filterbase, denoted by $\mathcal F$, as none of the elements have empty intersection. Let $\mu \in \beta(A^* \times \N \times \N^*)$ be an ultrafilter containing the filter base $\mathcal F$.
Next, consider the set
$$1^* \times \{1^{n!} \mid n \in \N\} \times (\N^2)^*.$$
By definition of $f$, we obtain $f(1^* \times \{1^{n!} \mid n \in \N\} \times (\N^2)^*) \subseteq L_{\text{AND}}$ and thus $1^* \times \{1^{n!} \mid n \in \N\} \times (\N^2)^* \subseteq f^{-1}(L_{\text{AND}})$.
Adding this to the pullback by $\lambda^{-1}(\mu)$ yields another filter base, denoted by $\mathcal F_{1^\omega}$. By $\mathcal F_{0^\omega}$ we denote the base $\lambda^{-1}(\mu)$ when adding the set
$$1^* \times \{0^{n!} \mid n \in \N\} \times (\N^2)^*.$$
Let $\gamma_{0^{\omega}}$ be an ultrafilter containing $\mathcal F_{0^\omega}$ and $\gamma_{1^{\omega}}$ be an ultrafilter containing $\mathcal F_{1^\omega}$. Then both ultrafilters satisfy conditions 1.-3. of the Theorem, such that $[\beta f(\gamma_{0^\omega}) \leftrightarrow \beta f(\gamma_{1^\omega})]$ holds for $\CCzero$. But $L_{\text{AND}} \in \beta f(\gamma_{1^\omega})$ and $L_{\text{AND}} \notin \beta f(\gamma_{0^\omega})$. Hence $L_{\text{AND}}$ is not in constant size $\CCzero$.

Equivalently, we use the identity $[110=100]$ satisfied by the variety corresponding to the gate types of $\ACzero$.
Again let
$$\mathcal F_2 = \{\bigcup_{i=0}^n A^* \times \N \times P_i^* \mid \{P_0,\ldots,P_n\} \text{ is a partition of } \N^2\}.$$
Adding the set $A^* \times \{3\} \times \N^*$ yields a filter base $\mathcal F'$. Let $\nu \in \beta(A^* \times \N \times \N^*)$ be an ultrafilter containing $\mathcal F'$. Consider the sets
$$S_{110} = 0^* \times \{110\} \times (\N^2)^* \text{ and } S_{100} = 0^* \times \{100\} \times (\N^2)^*.$$
Adding the set $S_{110}$ to the pullback $\lambda^{-1}(\nu)$ provides us with a new filter base $\mathcal F_{110}$ and respectively adding $S_{100}$ to $\lambda^{-1}(\nu)$ with a filter base $\mathcal F_{100}$.
Let $\gamma_{110}$ be an ultrafilter containing $\mathcal F_{110}$ and $\gamma_{100}$ be an ultrafilter containing $\mathcal F_{100}$. Then both ultrafilters satisfy conditions 1.-3. of the main theorem and thus $[\beta f(\gamma_{110}) \leftrightarrow \beta f(\gamma_{100})]$ is an equation satisfied by constant size $\ACzero$.
Since $f(S_{110}) \subseteq L_{\text{PARITY}}$ and $f(S_{100}) \subseteq L_{\text{PARITY}}^c$, we obtain $L_{\text{PARITY}} \in \beta f(\gamma_{110})$ but $L_{\text{PARITY}} \notin \beta f(\gamma_{100})$ and thus it is not in constant size $\ACzero$.
\end{proof}


\section{Proof of Soundness}\label{sec:soundness}
Let $u,v$ be two profinite words such that $[u = v]$ is an identity for $\mathcal V$ and let $\gamma_u, \gamma_v \in \beta(A^* \times A^* \times (\N^2)^*)$ be the associated ultrafilters satisfying the conditions of Theorem \ref{thm:main}. We have to show that
$[\beta f(\gamma_u) \leftrightarrow \beta f(\gamma_v)]$
is an equation of $(\BlockP)_{A}$. By Lemma \ref{lem:equations} this is equivalent to
$$[\beta \tau_{\mathcal D}(\beta f (\gamma_u)) \leftrightarrow \beta \tau_{\mathcal D}(\beta f (\gamma_v))]$$
being an equation of $\mathcal V$ for each partition $\mathcal D$ of $\N^2$. Thus, by Lemma \ref{lem:projection} it suffices to show that for each partition $\mathcal D$ and each language $L \in \mathcal V_{(A \times \mathcal D)}$, the equivalence
$$f^{-1}(\tau_{\mathcal D}^{-1}(L)) \in \gamma_u \Leftrightarrow f^{-1}(\tau_{\mathcal D}^{-1}(L)) \in \gamma_v$$
holds.

For every $P\in\mathcal D$ we have a morphism $h_P:A^*\rightarrow (A\times\mathcal D)^*$ that maps $a\mapsto (a,P)$.
Let $\mathcal E$ be the smallest boolean algebra containing all languages $h_P^{-1}(x^{-1}L)$ where $x \in (A\times\mathcal D)^*$.
As $\mathcal V$ is a variety and thus closed under inverse morphisms and quotients, $\mathcal V_A$ contains all languages in $\mathcal E$.
The minimal elements, with respect to inclusion, of $\mathcal E$ form a partition of $A^*$. Hence $u$ and $v$ each contain at one of these partition elements (see Lemma \ref{lem:partitionimappendix}. Since the minimal elements are languages of $\mathcal V$ and $[u=v]$ is an equation of $\mathcal V$, $u$ and $v$ actually contain the same partition element. Let $E_{u,v}\in\mathcal V$ denote this partition element by $E_{u,v}\in u,v$.  As  $\pi_2(\gamma_u)$ and $\pi_2(\gamma_v)$ extend $u$ and $v$ both of them contain $E_{u,v}$.

To show $f^{-1}(\tau_{\mathcal D}^{-1}(L))$ is contained in $\gamma_u$ if and only if it is contained in $\gamma_v$, by upward closure of filters it is enough to consider the subset $f^{-1}(\tau_{\mathcal D}^{-1}(L)) \cap \pi_2^{-1}(E_{u,v})$. 

This subset can be rewritten as the following join decomposition:
$$ (f^{-1}(\tau_{\mathcal D}^{-1}(L)) \cap \pi_2^{-1}(E_{u,v}) \cap \mathfrak{D}^c) \cup \left( \bigcup_{P \in \mathcal D} (\pi_c^{-1}(P) \cap f^{-1}(\tau_{\mathcal D}^{-1}(L)) \cap \pi_2^{-1}(E_{u,v}) \cap \mathfrak{D})\right),$$
where $\mathfrak{D}$, as defined in Section \ref{sec:equations}, denotes the the points where the function $f$ is not just a projection on the first component.
Since $\lambda^{-1}(\lambda(\mathfrak{D})) = \mathfrak{D}$ by condition \cond1 we obtain $\mathfrak{D} \in \gamma_u \Leftrightarrow \mathfrak{D} \in \gamma_v.$

We show for each set of the join separately that it is contained in $\gamma_u$ if and only if it is contained in $\gamma_v$. This is sufficient, as both $\gamma_u$ and $\gamma_v$ are ultrafilters and the sets of the decomposition are disjoint.

\begin{itemize}
\item The set $f^{-1}(\tau_{\mathcal D}^{-1}(L)) \cap \pi_2^{-1}(E_{u,v}) \cap \mathfrak{D}^c$:
\end{itemize}

\noindent By definition of $f$, when restricted to $\mathfrak{D}^c$, it is the identity of on the first component. We obtain
$$f^{-1}(\tau_{\mathcal D}^{-1}(L)) \cap \pi_2^{-1}(E_{u,v}) \cap \mathfrak{D}^c = \pi_1^{-1}(L) \cap \pi_2^{-1}(E_{u,v}) \cap \mathfrak{D}^c$$ and since $\pi_1$ factors through $\lambda$, by condition \cond1 that this term belongs to $\gamma_u$ if and only if it belongs to $\gamma_v$.

\begin{itemize}
\item The set $f^{-1}(\tau_{\mathcal D}^{-1}(L)) \cap \pi_2^{-1}(E_{u,v}) \cap \pi_c^{-1}(P) \cap \mathfrak{D}$:
\end{itemize}

\noindent Let $P \in \mathcal D$ be a partition element. The above set then, by definition, is equal to
$$\{(w,s,p) \in \mathfrak{D} \mid \tau_{\mathcal D}(f(w,s,p)) \in L, \pi_c(w,s,p) \in P \text{ and } s\in E_{u,v}\}.$$

\noindent For a word $w \in A^*$ and $Q \subseteq (\N^2)$ with $Q = \{p_0,\ldots,p_k\}$ where $p^1_1 < p^1_2 < \ldots < p^1_k$ define
$$w[Q] = w_{p_1^1}\ldots w_{p_k^1}$$

\noindent The condition we need to examine more closely is $\tau_{\mathcal D}(f(w,s,p)) \in L$. Therefore, we need to have a closer look at $\tau_{\mathcal D} \circ f$.
The requirement $\pi_c(w,s,p) \in P$ provides us with the information of all letters of $p$ being in the same equivalence class $P$ of $\mathcal D$ and by $(w,s,p)\in\mathfrak{D}$, we obtain $|p|=|s|$.
We split the image of  $\tau_{\mathcal D} \circ f$ into the part of the substitution and the remaining part. Fix a tuple $(w,s,p)$, and let $S$ be the set of substituted positions, that is, $S = \{p_0,\ldots,p_{\left|p\right|-1}\}$. Let $r$ be the remaining part in $(A \times \mathcal D)^*$ defined by 
$$r = (\tau_{\mathcal D}(f(w,s,p)))[S^c].$$
Furthermore, recall the definition of the morphism $h_P$. For $s\in A^*$, let $s^P = h_P(s) = (s_0,P)\dots(s_{|s|-1},P)$.
Then 
$$s^P=(\tau_{\mathcal D}(f(w,s,p)))[S].$$
Hence $\tau_{\mathcal D}(f(w,s,p))$ is a shuffle of $r$ and $s^P$.
As $L$ is a commutative language, a shuffle of $r$ and $s^P$ belongs to $L$ if and only if $ r s^P \in L$.
This allows us to rewrite $\tau_{\mathcal D}(f(w,s,p)) \in L$ as $r s^P \in L$.

By the construction of $E_{u,v}$ we have for every $s,t \in E_{u,v}$ that $x s^P \in L$ iff $x {t}^P\in L$, for any $x \in (A \times \mathcal D)^*$. Hence  $\tau_{\mathcal D}(f(w,s,p)) \in L$ is independent of the choice of $s \in E_{u,v}$ as long as it has the same length as $p$, so the tuple $(w,s,p)$ remains in $\mathfrak D$.

Finally we can rewrite our set above as an intersection of four sets:
$$\{(w,s,p) \in A^*\times A^*\times (\N^2)^* \mid \exists t \in E_{u,v} \colon |t|=|p|=|s| \text{ and } r t^P \in L\} \cap \pi_c^{-1}(P) \cap \pi_2^{-1}(E_{u,v})\cap \mathfrak{D}$$
which is equal to
$$\lambda^{-1}\left(\{(w,l,p) \in A^*\times \N \times (\N^2)^* \mid \exists t \in E_{u,v} \colon |t|=|p|=l \text{ and } r t^P \in L\}\right)\cap \pi_c^{-1}(P) \cap \pi_2^{-1}(E_{u,v})\cap \mathfrak{D}.$$
By definition of $E_{u,v}$, the set $\pi_2^{-1}(E_{u,v})$ it is an element of both $\gamma_u$ and $\gamma_v$ and for all all other sets we have by the conditions on the filters that they belong to $\gamma_u$ if and only if they belong to $\gamma_v$.


\section{Proof of Completeness}\label{sec:completeness}

In this section, we will show that the families of equations we obtained in the previous section, suffice to characterize $\BlockPA$. For an identity $[u = v]$ of $\mathcal V$ denote by $\mathcal{E}_{[u=v]}$ the set of all ultrafilter equations
$[\beta f(\gamma_u) \leftrightarrow \beta f(\gamma_v)]$
where $\gamma_u$ and $\gamma_v$ satisfy the conditions of Theorem \ref{thm:main}.
We will then show that any language $L \in \mathcal{P}(A^*)$ that for each identity $[u = v]$ of $\mathcal V$ satisfies the equations of $\mathcal{E}_{u=v}$, is indeed a language in $\BlockPA$.
Since we required $\mathcal V$ to be commutative, the set $\mathcal{E}_{[ab = ba]}$ will always be among the sets of equations defining $\BlockPA$, which is an essential part for the proof of completeness.

The proof sketch follows closely the proof of completeness in \cite{GKP14}, where the completeness of equations for a fragment of logic was shown. For the more general case of an arbitrary variety $\mathcal V$ blocked with $\Parb$ a few technical changes were necessary. 

Let $w \in A^*$ and $i,j < \left| w \right|$. By $w \cdot (ij)$ denote the word that results when exchanging the letters in positions $i$ and $j$ of $w$.
Let $L$ be a language of $A^*$. Define a relation $R_L$ on $\N^2$ by
$$(i,n) R_L (j,m) \text{ iff } i+n \neq j+m \text{ or } \forall w \in A^* \quad \left|w\right| > i,j \quad \colon w \in L \Leftrightarrow w \cdot (ij) \in L$$

For a family $\mathcal{F}$ of subsets of some set $X$, we say that $\mathcal{F}$ is a \emph{filter base}, if every finite subfamily of $\mathcal{F}$ has non-empty intersection. By Zorn's Lemma, we can extend any filter base to an ultrafilter on $X$. This will be used later on to construct ultrafilters from specific filter bases, that provide us automatically with the ultrafilters having certain properties we desire.

\begin{lemma}\label{lemma:pullback}
Let $X,Y$ be two sets, $\gamma \in \beta X$ and $g: X \rightarrow Y$ a function. Then, for each $\alpha \in \beta Y$ the following conditions are equivalent
\begin{enumerate}
\item $\beta g(\gamma) = \alpha$
\item $\{g^{-1}(L) \mid L \in \alpha\} \subseteq \gamma$
\end{enumerate}
\end{lemma}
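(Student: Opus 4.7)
The plan is to unpack the definition of $\beta g$ stated earlier in the paper and then exploit maximality of ultrafilters to upgrade inclusion to equality.

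For the direction $(1) \Rightarrow (2)$, I would take an arbitrary $L \in \alpha$ and use the hypothesis $\beta g(\gamma) = \alpha$ together with the defining equivalence
\[
L \in \beta g(\gamma) \Leftrightarrow g^{-1}(L) \in \gamma
\]
(recalled in the paragraph preceding Definition of Equation in Section \ref{sec:duality}) to conclude $g^{-1}(L) \in \gamma$. This immediately gives the set-theoretic inclusion $\{g^{-1}(L) \mid L \in \alpha\} \subseteq \gamma$.

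For the direction $(2) \Rightarrow (1)$, I would apply the same defining equivalence in reverse: if $L \in \alpha$, then $g^{-1}(L) \in \gamma$ by hypothesis, hence $L \in \beta g(\gamma)$. This shows $\alpha \subseteq \beta g(\gamma)$. Both sides are ultrafilters of $\mathcal{P}(Y)$, and ultrafilters are maximal proper filters with respect to inclusion, so any inclusion between two ultrafilters is an equality; hence $\alpha = \beta g(\gamma)$.

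The only subtlety worth flagging is that $\beta g(\gamma)$ must actually be an ultrafilter (not merely a filter) for the maximality argument to apply, but this is exactly the content of the Stone-\v Cech extension recalled in Section~\ref{sec:duality}: $\beta g$ maps $\beta X$ into $\beta Y$. Beyond that, the proof is essentially a one-line unfolding of the definition in each direction, so I expect no real obstacle and the argument should fit in a few lines.
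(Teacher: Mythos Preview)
Your proof is correct and follows essentially the same approach as the paper's: both directions unfold the defining equivalence $L \in \beta g(\gamma) \Leftrightarrow g^{-1}(L) \in \gamma$, and the nontrivial direction $(2)\Rightarrow(1)$ is closed off by maximality of ultrafilters. The only cosmetic difference is that the paper applies maximality on the $\beta X$ side (arguing that $\{g^{-1}(L)\mid L\in\alpha\}$ is already an ultrafilter and hence equals $\gamma$), whereas you apply it on the $\beta Y$ side (arguing $\alpha\subseteq\beta g(\gamma)$ forces equality); your variant is arguably cleaner since it sidesteps checking that the pullback family is itself an ultrafilter on $\mathcal P(X)$.
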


\begin{proof}
(1) implies (2). Let $L \in \alpha = \beta g(\gamma)$, then $g^{-1}(L) \in \gamma$ and thus $(2)$ holds.

(2) implies (1). Let $L \in \alpha$. Since the preimage is closed under complement, intersections and upsets and $\alpha$ is an ultrafilter, $\{g^{-1}(L) \mid L \in \alpha\}$ is an ultrafilter, too.
Since ultrafilters are maximal $\{g^{-1}(L) \mid L \in \alpha\} = \gamma$. This implies $L \in \alpha$ if and only if $g^{-1}(L) \in \gamma$ and thus $\beta g(\gamma) = \alpha$.
\end{proof}

We call the set $\{g^{-1}(L) \mid L \in \alpha\}$ \emph{pullback} of $\alpha$ by $g$, denoted by $g^{-1}(\alpha)$.

We will often use the fact that adding certain sets to a pullback of some ultrafilter will still yield a filter base.

\begin{lemma}\label{lemma:addPullback}
Let $X,Y$ be two sets, $\gamma \in \beta X$ and $g: X \rightarrow Y$ a function and $\alpha \in \beta Y$. For some set $R \subseteq X$, the condition $g(R) \in \alpha$ implies that
$g^{-1}(\alpha) \cup \{R\}$
is a filterbase.
\end{lemma}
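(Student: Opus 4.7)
The plan is to verify the finite intersection property directly, which is the definition of filter base.

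First I would reduce the general finite intersection to a single instance. Any finite subfamily $\mathcal{G}$ of $g^{-1}(\alpha) \cup \{R\}$ has the form $\{g^{-1}(L_1), \ldots, g^{-1}(L_n)\}$ or $\{g^{-1}(L_1), \ldots, g^{-1}(L_n), R\}$ with $L_1, \ldots, L_n \in \alpha$. Using the general identity $g^{-1}(L_1) \cap \cdots \cap g^{-1}(L_n) = g^{-1}(L_1 \cap \cdots \cap L_n)$, and setting $L := L_1 \cap \cdots \cap L_n$, I only need to show that $g^{-1}(L) \cap R \neq \emptyset$ (the case without $R$ follows by the same argument since $L \neq \emptyset$ is witnessed inside $g(R)$). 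Since $\alpha$ is a filter, $L \in \alpha$.

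Next, I would exploit the hypothesis $g(R) \in \alpha$. Because $\alpha$ is closed under finite intersections, $L \cap g(R) \in \alpha$, and because $\emptyset \notin \alpha$ (by the definition of a filter) this intersection is non-empty. Picking any $y \in L \cap g(R)$, surjectivity of $g$ onto $g(R)$ yields some $x \in R$ with $g(x) = y \in L$, so $x \in g^{-1}(L) \cap R$. This establishes the finite intersection property and hence the lemma.

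I do not anticipate any real obstacle here: the argument is a one-line book-keeping exercise once one observes that pullback commutes with finite intersection and that $g(R) \in \alpha$ is exactly the compatibility condition needed to ensure no finite intersection is forced to be empty. The only minor point worth being explicit about is that $g^{-1}$ genuinely turns unions into unions and intersections into intersections, so the finite family of preimages collapses to a single preimage, which is what makes the argument clean.
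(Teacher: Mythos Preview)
Your proof is correct and follows essentially the same approach as the paper's: both reduce to showing $g^{-1}(L)\cap R\neq\emptyset$ for every $L\in\alpha$ by using $g(R)\cap L\neq\emptyset$ (since $g(R),L\in\alpha$) and then pulling back a witness. You are simply more explicit than the paper about collapsing a finite family $g^{-1}(L_1),\dots,g^{-1}(L_n)$ to a single $g^{-1}(L)$ via closure of $\alpha$ under finite intersections.
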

\begin{proof}
We have to show that no element of $g^{-1}(\alpha)$ has empty intersection with $R$. Since $g(R)$ is an element of $\alpha$, we have $g(R) \cap L \neq \emptyset$ for each $L \in \alpha$. This of course implies that $g^{-1}(L) \cap R \neq \emptyset$ for all $L \in \alpha$ and thus yields the claim.
\end{proof}

\begin{theorem}\label{theorem:contPos}
Let $\gamma \in \beta( A^* \times A^* \times (\N^2)^*)$ with $k \geq 1$. Then, for each $\alpha \in \beta(\N^2)$, the following conditions are equivalent:
\begin{enumerate}
\item $\beta \contPos(\gamma) = \alpha$
\item $\{A^* \times A^* \times P^* \mid P \in \alpha\} \subseteq \gamma$
\end{enumerate}
Furthermore, these conditions hold for $\gamma$ with respect to some $\alpha$ if and only if
\begin{enumerate}
\setcounter{enumi}{2}
\item For each partition $\{P_1,\ldots,P_n\}$ of $\N^2$, we have $\bigcup_{i=1}^n (A^* \times A^* \times P_i^*) \in \gamma$
\end{enumerate}
\end{theorem}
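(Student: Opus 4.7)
The plan is to treat the three equivalences in two stages. First I would establish $(1) \Leftrightarrow (2)$ as an application of Lemma~\ref{lemma:pullback}; then show $(2) \Leftrightarrow (3)$, observing that (3) is really the ``existence'' form of (2), where the target ultrafilter $\alpha$ is not prescribed in advance.

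For $(1) \Leftrightarrow (2)$, the key observation is that for any $P \subseteq \N^2$, the preimage under $\contPos$ of the basic clopen ``$P$ belongs to the filter'' is exactly $A^* \times A^* \times P^*$: the finite subset $\contPos(w,s,p) = \{p_0,\ldots,p_{|p|-1}\}$ contains $P$ in its associated upward-closure filter if and only if every coordinate $p_i$ lies in $P$, i.e.\ $p \in P^*$. Hence the pullback of $\alpha$ along $\contPos$ is precisely $\{A^* \times A^* \times P^* \mid P \in \alpha\}$, and Lemma~\ref{lemma:pullback} yields the equivalence, once we note that the statement ``$\beta\contPos(\gamma) = \alpha$'' with $\alpha$ an ultrafilter already forces the generically filter-valued extension to land in $\beta(\N^2)$.

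For $(2) \Rightarrow (3)$, given any partition $\{P_1,\ldots,P_n\}$ of $\N^2$, Lemma~\ref{lem:partitionimappendix} applied to the ultrafilter $\alpha$ picks out a unique $P_i \in \alpha$, so condition (2) puts $A^* \times A^* \times P_i^* \in \gamma$, and upward closure of $\gamma$ gives the union in $\gamma$. For the converse $(3) \Rightarrow (2)$, I would define the candidate $\alpha := \{P \subseteq \N^2 \mid A^* \times A^* \times P^* \in \gamma\}$ and verify it is an ultrafilter on $\mathcal{P}(\N^2)$: upward closure uses $P \subseteq Q \Rightarrow P^* \subseteq Q^*$; closure under intersection uses the identity $P^* \cap Q^* = (P \cap Q)^*$; and the ultrafilter condition follows by applying (3) to the two-element partition $\{P, P^c\}$ together with the ultrafilter property of $\gamma$ itself, which forces one of the two disjoint summands $A^* \times A^* \times P^*$ or $A^* \times A^* \times (P^c)^*$ to lie in $\gamma$. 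The side condition $\emptyset \notin \alpha$ is where the hypothesis that $\gamma$ lives on tuples of length at least one is used, ruling out $A^* \times A^* \times \{\varepsilon\}$ from $\gamma$.

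The main obstacle I expect is conceptual rather than technical: careful bookkeeping of the filter-versus-ultrafilter distinction for the extension $\beta\contPos$, which a priori only lands in $\mathcal{F}(\N^2)$. Conditions (1)--(2) are precisely the statement that this filter is actually an ultrafilter and equals $\alpha$, and condition (3) is the combinatorial witness that rules out the image being a strict filter, by ensuring that every partition of $\N^2$ is ``resolved'' somewhere inside $\gamma$. Once this framing is fixed, the remaining manipulations are routine ultrafilter arithmetic on top of Lemma~\ref{lemma:pullback} and Lemma~\ref{lem:partitionimappendix}.
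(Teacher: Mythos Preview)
Your proposal is correct and follows essentially the same approach as the paper: the identification $\contPos^{-1}(P) = A^* \times A^* \times P^*$ feeds into Lemma~\ref{lemma:pullback} for $(1)\Leftrightarrow(2)$, and for $(3)\Rightarrow(2)$ the paper defines the very same candidate $\alpha = \{P \mid A^*\times A^*\times P^* \in \gamma\}$ and checks the ultrafilter axioms via the two-block partition $\{P,P^c\}$. Your write-up is in fact slightly more careful than the paper's on two points it glosses over: the filter-versus-ultrafilter bookkeeping for the codomain of $\beta\contPos$, and the role of the stray hypothesis ``$k\ge 1$'' in securing $\emptyset\notin\alpha$ (which the paper's proof never mentions).
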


\begin{proof}
Since $A^* \times A^* \times P^* = \contPos^{-1}(P)$, $(1)$ and $(2)$ are equivalent by Lemma \ref{lemma:pullback}.

For the second assertion, suppose there is an $\alpha \in \beta(\N)$ such that conditions $(1)$ and $(2)$ hold. Let $\{P_1,\ldots,P_n\}$ be a partition of of $\N^2$. Then $\bigcup_{i=0}^n P_i = \N^2$ and the fact that $\alpha$ is an ultrafilter implies that $P_k \in \alpha$ for some $k \in \{1,\ldots,n\}$ and thus $A^* \times A^* \times P_k^* \in \gamma$ by condition $(2)$. Since $\gamma$ is an upset, $(3)$ holds.

Suppose that $\gamma$ satsifies $(3)$ and let $\alpha = \{P \subseteq \N^2 \mid A^* \times A^* \times P^* \in \gamma\}$. Then $\alpha$ is an upset closed under intersection. For each $P \subseteq \N^2$, the partition $\{P,P^c\}$ forces either $A^* \times A^* \times P^* \in \gamma$ or $A^* \times A^* \times (P^c)^* \in \gamma$. Thus $\alpha$ is an ultrafilter and by the equivalence of $(1)$ and $(2)$ we have $\beta \contPos(\gamma) = \alpha$.
\end{proof}

\begin{lemma}\label{lemma:finIndex}
If a language $L$ of $A^*$ satisfies all the equations $\mathcal{E}_{[ab=ba]}$ for all $a,b \in A$, then $R_L$ contains an equivalence relation of finite index.
\end{lemma}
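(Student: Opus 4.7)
The plan is to argue by contrapositive: assuming $R_L$ contains no equivalence relation of finite index, I construct ultrafilters violating one of the equations $\mathcal E_{[ab=ba]}$. For each $a,b\in A$ define $R_L^{ab}$ on $\N^2$ by $(i,n) R_L^{ab} (j,m)$ iff $i+n\neq j+m$ or every $w\in A^{i+n+1}$ with $w_i=a$ and $w_j=b$ satisfies $w\in L\Leftrightarrow w\cdot(ij)\in L$. Since $R_L=\bigcap_{a,b\in A} R_L^{ab}$ and finite intersections of finite-index equivalence relations remain finite-index, it suffices to show that each $R_L^{ab}$ contains a finite-index equivalence relation. So fix $a,b\in A$ and suppose toward contradiction that $R_L^{ab}$ has no such sub-equivalence.

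For every finite partition $\mathcal P$ of $\N^2$ the hypothesis produces a block $B$ of $\mathcal P$, two elements $p_0,p_1\in B$ with $p_0^1<p_1^1$ and $p_0^1+p_0^2=p_1^1+p_1^2$, and a witness word $w$ of the corresponding length satisfying $w_{p_0^1}=a$, $w_{p_1^1}=b$, and $w\in L$ XOR $w\cdot(p_0^1\,p_1^1)\in L$. Bundle these into
\[\tilde W^{\mathcal P}=\{(w,2,(p_0,p_1))\in A^*\times\N\times(\N^2)^*\mid p_0,p_1\text{ in one block of }\mathcal P,\ w_{p_0^1}=a,\ w_{p_1^1}=b,\ \text{and the witness condition holds}\},\]
and split $\tilde W^{\mathcal P}_{ab}=\tilde W^{\mathcal P}\cap\{w\in L\}$, $\tilde W^{\mathcal P}_{ba}=\tilde W^{\mathcal P}\setminus\tilde W^{\mathcal P}_{ab}$. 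A refinement dichotomy yields a partition $\mathcal P_0$ and a direction $\star\in\{ab,ba\}$ with $\tilde W^{\mathcal R}_{\star}\neq\emptyset$ for every refinement $\mathcal R\geq\mathcal P_0$: otherwise, starting from any $\mathcal P$, one could first refine to kill $\tilde W^{\mathcal R}_{ab}$ and then further refine to kill $\tilde W^{\mathcal R'}_{ba}$, emptying $\tilde W^{\mathcal R'}$ and contradicting the hypothesis. Assume WLOG $\star=ab$.

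Next, extend the family $\{\tilde W^{\mathcal R}_{ab}\mid \mathcal R\geq\mathcal P_0\}\cup\{\bigcup_{P\in\mathcal Q}A^*\times\N\times P^*\mid\mathcal Q\text{ partition of }\N^2\}$ to an ultrafilter $\alpha\in\beta(A^*\times\N\times(\N^2)^*)$; the finite intersection property is routine since the join of finitely many partitions is still a refinement of $\mathcal P_0$ and all the relevant witness sets decrease along refinement. By Lemma \ref{lemma:addPullback}, $\lambda^{-1}(\alpha)\cup\{\pi_2^{-1}(\{ab\})\}$ and $\lambda^{-1}(\alpha)\cup\{\pi_2^{-1}(\{ba\})\}$ are filterbases (since $\lambda(\pi_2^{-1}(\{ab\}))=A^*\times\{2\}\times(\N^2)^*\in\alpha$); extend them to ultrafilters $\gammaU$ and $\gammaV$. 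Condition \cond{1} holds by construction; \cond{2} holds because $\{ab\}$ is a regular language and every regular superset inherits its preimage in $\gammaU$; \cond{3} follows because $\pi_c$ factors through $\lambda$, forcing $\beta\pi_c(\gammaU)=\beta\pi_c(\gammaV)$, and Theorem \ref{theorem:contPos} together with the partition sets built into $\alpha$ makes this value a point of $\beta(\N^2)$.

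Finally, on $\lambda^{-1}(\tilde W^{\mathcal P_0}_{ab})\cap\pi_2^{-1}(\{ab\})\in\gammaU$ we have $f(w,ab,(p_0,p_1))=w\in L$, so $f^{-1}(L)\in\gammaU$; on the twin set $\lambda^{-1}(\tilde W^{\mathcal P_0}_{ab})\cap\pi_2^{-1}(\{ba\})\in\gammaV$ we have $f(w,ba,(p_0,p_1))=w\cdot(p_0^1\,p_1^1)\notin L$, so $f^{-1}(L)\notin\gammaV$. This contradicts the equation $[\beta f(\gammaU)\leftrightarrow\beta f(\gammaV)]\in\mathcal E_{[ab=ba]}$. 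The main obstacle is the direction dichotomy: a priori the witness word for a given partition could lie in $L$ or in $L^c$, and one must commit to a single global choice compatible with cofinally many refinements in order to feed one $\alpha$ into both $\gammaU$ and $\gammaV$ while keeping $L$ on opposite sides.
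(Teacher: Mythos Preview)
Your overall architecture---split $R_L$ into pieces $R_L^{ab}$, assume one piece admits no finite-index sub-equivalence, and build ultrafilters $\gammaU,\gammaV$ witnessing a failed equation---matches the paper's, and your filterbase/pullback machinery is used correctly. There is, however, a genuine gap in the step where you pass from ``$\sim_{\mathcal P}\not\subseteq R_L^{ab}$'' to a witness with $p_0^1<p_1^1$. Your relation $R_L^{ab}$ is \emph{not} symmetric: $\neg\bigl((p_0)R_L^{ab}(p_1)\bigr)$ yields a word $w$ with $w_{p_0^1}=a$ and $w_{p_1^1}=b$, but nothing prevents $p_0^1>p_1^1$. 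In that case $(w,ab,(p_0,p_1))\notin\mathfrak D$, so $f$ degenerates to the first projection on both sides and your final contradiction evaporates. You would need a second dichotomy (on the order of $p_0^1,p_1^1$) stacked on top of your direction dichotomy, and you have not argued it.

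The paper sidesteps \emph{both} dichotomies by a normalisation you may find instructive. It defines
\[
S_{ab}=\{(w,ab,(i,\cdot)(j,\cdot))\mid i<j,\ w_i=a,\ w_j=b,\ w\in L,\ w\cdot(ij)\notin L\},
\]
so the order $i<j$ and the direction ``$w\in L$'' are hard-wired. The companion set $S_{ba}$ differs only in the middle component, whence $\lambda(S_{ab})=\lambda(S_{ba})$ automatically, and $f(S_{ab})\subseteq L$ while $f(S_{ba})\subseteq L^c$ with no case analysis. The apparent loss of generality is absorbed by ranging over \emph{all ordered} pairs $(a,b)\in A^2$: a witness with $w\notin L$ and $w\cdot(ij)\in L$ is relabelled via $w'=w\cdot(ij)$ and lands in $S_{a'b'}$ for the swapped letters, and an out-of-order witness lands in the symmetrised set $M_{ab}$. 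Thus the paper never needs your refinement-Ramsey argument. Your route can be repaired (add the order dichotomy, or symmetrise $R_L^{ab}$ and accept a four-way split), but the paper's bookkeeping is cleaner.
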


\begin{proof}
For $(a,b) \in A^2$, let
\begin{align*}
S_{ab} = \{(w,ab,(i,\left|w\right|-i)(j,\left|w\right|-j)) \in A^* \times A^* \times (\N^2)^* \mid & i < j < \left| w \right|,  w_i = a, w_j = b,\\ & w \in L \text{ but } w \cdot (ij) \notin L\}
\end{align*}
and
\begin{align*}
S_{ba} = \{(w,ba,(i,\left|w\right|-i)(j,\left|w\right|-j)) \in A^* \times A^* \times (\N^2)^* \mid & i < j < \left| w \right|,  w_i = a, w_j = b,\\ & w \in L \text{ but } w \cdot (ij) \notin L\}
\end{align*}
and
\begin{align*}
M_{ab} = \{((i,n),(j,m)) \in (\N^2)^2 \mid & \text{ there exists } w \in A^* \text{ such that } \\ & (w,ab,(i,n)(j,m)) \in S_{ab} \text{ or } (w,ab,(j,m)(i,n)) \in S_{ab}\}
\end{align*}
Then we have
$ R_L^c = \bigcup_{(a,b) \in A^2} M_{ab}.$
We show, by contraposition, that for all $(a,b) \in A^2$ there is a finite partition $\{P_1,\ldots,P_n\}$ of $\N^2$ such that the corresponding equivalence relations $\theta_{ab}$ is disjoint from $M_{ab}$. Taking the refinement of all these equivalence relation will then provide us with the desired equivalence relation contained in $R_L$. Now, suppose that for each finite partition $\{P_1,\ldots,P_n\}$ of $\N^2$,
\begin{equation}
 M_{ab} \cap \left( \bigcup_{i=1}^n P_i^2 \right) \neq \emptyset . \tag{$\ast$}
\end{equation}
Under this premise, we will construct two ultrafilters $\gamma_{ab}$ and $\gamma_{ba}$, satisfying conditions 1. - 3. of Theorem \ref{thm:main} and show that $L$ does not satisfy the equation $[\beta f(\gamma_{ab}) \leftrightarrow \beta f (\gamma_{ba})]$.
The set
$$\mathcal{F} = \{\bigcup_{i=1}^n A^* \times \N \times P_i^* \mid \{P_1,\ldots,P_n\} \text{ is a partition of } \N^2\}$$
is a filterbase on $A^* \times \N \times (\N^2)^*$. By condition $(\ast)$, $\lambda(S_{ab})$ does not have empty intersection with any of the elements of $\mathcal{F}$. Thus we can extend the filterbase $\mathcal{F}$ by $\lambda(S_{ab})$, which is equal to $\lambda(S_{ba})$, and obtain an extended filterbase. Let $\mu \in \beta(A^* \times \N \times (\N^2)^*)$ be an ultrafilter containing the extended filterbase.
Then
$$\mathcal{F}_{ab} = \lambda^{-1}(\mu) \cup A^* \times \{ab\} \times (\N^2)^*$$
is again a filterbase on $A^* \times A^* \times (\N^2)^*$. To see this, we must consider that $\mu$ was required to contain $\lambda(S_{ab})$. Since any $L \in \mu$ has non-empty intersection with $\lambda(S_{ab})$, their projection on the second component will contain $\left| ab \right|$ as an element and thus no preimage of any $L \in \mu$ will have empty intersection with $A^* \times \{ab\} \times (\N^2)^*$.
Furthermore any ultrafilter containing $\mathcal{F}_{ab}$ will also contain $S_{ab}$, since $\lambda^{-1}(\lambda(S_{ab})) \cap A^* \times \{ab\} \times (\N^2)^* = S_{ab}$.
By the same argument,
$$\mathcal{F}_{ba} = \lambda^{-1}(\mu) \cup A^* \times \{ba\} \times (\N^2)^*$$
is a filter base and any ultrafilter containing $\mathcal{F}_{ba}$ will also contain $S_{ba}$.
Let $\gamma_{ab}$ be an ultrafilter containing $\mathcal{F}_{ab}$ and respectively $\gamma_{ba}$ an ultrafilter containing $\mathcal{F}_{ba}$. Note that $S_{ba} \notin \gamma_{ab}$ and $S_{ab} \notin \gamma_{ba}$, since the two sets have empty intersection.
By Lemma \ref{lemma:pullback} and Theorem \ref{theorem:contPos} the two ultrafilters satisfy
$$\beta \contPos(\gamma_{ab}) = \beta \contPos(\gamma_{ba}) \in \beta(\N^2) \text{ and } \beta \lambda(\gamma_{ab}) = \beta \lambda(\gamma_{ba}).$$
Since ultrafilters are upsets and $\gamma_{ab}$ contains $A^* \times \{ab\} \times (\N^2)^*$, we have that $ab$ as an element of $\mathcal(\Reg)$ is a subset of $\beta \pSub (\gamma_{ab})$ and respectively $ba \subseteq \beta \pSub (\gamma_{ba})$.
By definition $f(S_{ab}) \subseteq L$ or equivalently $S_{ab} \subseteq f^{-1}(L)$ and $S_{ba} \subseteq f^{-1}(L^c)$. Thus $L \in \beta f (\gamma_{ab})$ but $L \notin \beta f (\gamma_{ba})$.
By contraposition, if $L$ satisfies $\mathcal{E}_{[ab = ba]}$, then there is an equivalence relation $\theta_{ab}$ of finite index which is disjoint for $M_{ab}$. Setting $\theta = \bigcap_{a,b \in A} \theta_{ab}$, we see that $\theta$ is an equivalence relation of finite index contained in $R_L$ since
$$ \theta = \bigcap_{a,b \in A} \theta_{ab} \subseteq \bigcap_{a,b \in A} M_{ab}^c = R_L $$
\end{proof}

A direct consequence is the following Corollary, which makes use of the fact that each finite equivalence class can be split into singletons and still yields an equivalence relation.

\begin{corollary}\label{corollary:finIndex}
If a language $L$ of $A^*$ satisfies the equations $\mathcal{E}_{[ab=ba]}$ for all $a,b \in A$, then $R_L$ contains an equivalence relation of finite index for which each finite equivalence class is a singleton.
\end{corollary}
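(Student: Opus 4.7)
The plan is to bootstrap off Lemma \ref{lemma:finIndex}, which already supplies an equivalence relation $\theta$ of finite index with $\theta \subseteq R_L$, and refine $\theta$ by shattering each of its finite classes into singletons. Concretely, I would define a new relation $\theta'$ on $\N^2$ by declaring $x \mathrel{\theta'} y$ if and only if either $x \mathrel{\theta} y$ and the $\theta$-class of $x$ is infinite, or $x = y$. The infinite $\theta$-classes then survive unchanged in $\theta'$, whereas each finite $\theta$-class is broken into its constituent points.

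First I would verify that $\theta'$ is still an equivalence relation: reflexivity is built in via the second clause, symmetry follows from the symmetry of $\theta$ together with equality, and for transitivity any chain either stays inside a single infinite $\theta$-class or somewhere collapses via the equality clause. Next I would argue finite index: since $\theta$ has only finitely many equivalence classes, at most finitely many of them are finite, and each such finite class contributes only finitely many singleton $\theta'$-classes, while the infinite $\theta$-classes carry over directly. Finally, I would confirm $\theta' \subseteq R_L$. If $x \mathrel{\theta'} y$ via the first clause then $x \mathrel{\theta} y$, so Lemma \ref{lemma:finIndex} applies; if via the second clause, then $x = y$ and one only needs reflexivity of $R_L$, which holds because swapping a position with itself is the identity on words, so the defining biconditional $w \in L \Leftrightarrow w \cdot (ii) \in L$ is trivially satisfied.

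By construction every finite $\theta'$-class is a singleton, $\theta'$ has finite index, and $\theta'$ lies inside $R_L$, so it is the desired refinement. I do not anticipate any real obstacle here; the Corollary is essentially a routine refinement step layered on top of the previous lemma, and the only subtlety worth verifying explicitly is the reflexivity of $R_L$ needed to accommodate the equality clause.
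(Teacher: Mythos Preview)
Your proposal is correct and follows exactly the approach the paper indicates: start from the equivalence relation $\theta$ supplied by Lemma~\ref{lemma:finIndex} and split its finite classes into singletons, noting this preserves finite index and containment in $R_L$. One minor simplification: since $\theta$ is reflexive and $\theta' \subseteq \theta$ (the equality clause is already covered by reflexivity of $\theta$), the inclusion $\theta' \subseteq R_L$ follows immediately from $\theta \subseteq R_L$ without separately invoking reflexivity of $R_L$.
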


We use the Eilenberg correspondence between varieties of regular languages and varieties of finite monoids. By $\mathbf V$ denote the variety of finite monoids associated with $\mathcal V$.

By $\MorphV$ denote the set of all morphisms from $A^*$ into a monoid of $\VMon$. $\MorphV$ is countable, since all monoids are finite and hence there are countably many morphisms into monoids of $\VMon$.
Then there exists a bijection $\phi_{\text{Hom}} : \N \rightarrow \MorphV$. As a shorthand define $h_i := \phi_{\text{Hom}}(i)$, where $h_i: A^* \rightarrow M_i$ and $M_i \in \VMon$.

The space $\N \times \MorphV$ thus is countable, too and hence any family of words of $A^*$ indexed by $\N \times \MorphV$ is a sequence. Let $\phi: \N \times \MorphV \rightarrow \N$ be a bijection and $(s_n)_{n \in \N}$ be a sequence of words. For $n \in \N$ and $h \in \MorphV$, define $s(n,h) := s_{\phi(n,h)}$.

\begin{lemma}
Let $(s_n)_{n \in \N}$ and $(t_n)_{n \in \N}$ be two sequences of words of $A^*$ satisfying the property $h(s(n,h)) = h(t(n,h))$ for all $h \in \MorphV$.
Then for each $N \in \N$ there exists a fixed morphism $\varphi_N \in \MorphV$, such that for all $i \leq N$:
$h_i (s(N,\varphi_N)) = h_i (t(N,\varphi_N))$
\end{lemma}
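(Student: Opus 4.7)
My plan is to exploit closure of $\VMon$ under finite products. Fix $N \in \N$ and consider the morphisms $h_0, h_1, \ldots, h_N$ into their respective target monoids $M_0, M_1, \ldots, M_N \in \VMon$. I would define $\varphi_N \colon A^* \to M_0 \times M_1 \times \cdots \times M_N$ to be the pairing
\[\varphi_N(w) = (h_0(w), h_1(w), \ldots, h_N(w)).\]
By the third defining property of a variety of monoids (closure under finite direct products), the target $M_0 \times \cdots \times M_N$ lies in $\VMon$, and so $\varphi_N$ itself belongs to $\MorphV$.

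Next, I would simply instantiate the hypothesis at the pair $(N, \varphi_N) \in \N \times \MorphV$, which yields the single equation
\[\varphi_N(s(N, \varphi_N)) = \varphi_N(t(N, \varphi_N)).\]
Since equality in a direct product is equivalent to componentwise equality, this unpacks to $h_i(s(N, \varphi_N)) = h_i(t(N, \varphi_N))$ for every $i \leq N$, which is exactly the conclusion.

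I anticipate no real obstacle: the only non-trivial fact beyond the hypothesis itself is closure of $\VMon$ under finite products, which is built into the definition of a variety. Morally this is a diagonalisation-style trick, choosing the morphism $\varphi_N$ to be rich enough to simultaneously witness the images under all of $h_0, \ldots, h_N$; the indexing scheme through the bijection $\phi_{\text{Hom}}$ does not enter the proof, since the statement only requires \emph{some} morphism $\varphi_N \in \MorphV$ to exist.
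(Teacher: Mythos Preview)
Your proposal is correct and follows essentially the same approach as the paper: define $\varphi_N$ as the product morphism $w \mapsto (h_0(w),\ldots,h_N(w))$, invoke closure of $\VMon$ under finite products to place $\varphi_N$ in $\MorphV$, and then read off the desired equalities componentwise from the hypothesis instantiated at $(N,\varphi_N)$. The only cosmetic difference is that the paper writes the product up to index $N-1$ rather than $N$, which is a minor indexing inconsistency in the paper itself; your version matches the stated conclusion ``for all $i \leq N$''.
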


\begin{proof}
Let $N \in \N$. By $M_i$ denote the monoid that $h_i$ maps into. Define $\varphi_N : A^* \rightarrow M_0 \times \ldots \times M_{N-1}$ by  $\varphi_N (w) = (h_0(w), \ldots, h_{N-1}(w))$. This makes $\varphi_N$ a morphism of $\MorphV$, since $\VMon$ is closed under finite products.
The condition $\varphi_N(s(N,\varphi_N)) = \varphi_N(t(N,\varphi_N))$ then implies $h_i (s(N,\varphi_N)) = h_i (t(N,\varphi_N))$.
\end{proof}

Note that we may choose a bijection $\phi$, that satisfies the property that for $n \leq m$, we have $\phi(n,h) \leq \phi(m,h)$ for all morphisms $h \in \MorphV$. This allows for the following Corollary. 

\begin{corollary}\label{corollary:subSeq}
Let $(s_n)_{n \in \N}$ and $(t_n)_{n \in \N}$ be two sequences of words of $A^*$ satisfying the property $h(s(n,h)) = h(t(n,h))$ for all $h \in \MorphV$.
Then there exist subsequences $(s_{m_n})_{n \in \N}$ and $(t_{m_n})_{n \in \N}$ such that for all $i \leq m_n$:
$h_i(s_{m_n}) = h_i(t_{m_n}).$
\end{corollary}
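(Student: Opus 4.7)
The plan is to invoke the previous lemma once per parameter $n \in \N$ and then bundle the resulting witnesses into a single common subsequence, using the monotonicity property of the chosen bijection $\phi$ to guarantee that the collected indices escape to infinity.

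First, for each $n \in \N$ the previous lemma supplies a morphism $\varphi_n \in \MorphV$ satisfying $h_i(s(n,\varphi_n)) = h_i(t(n,\varphi_n))$ for all $i \leq n$. Unfolding the definition $s(n,\varphi_n) = s_{\phi(n,\varphi_n)}$ and writing $k_n := \phi(n,\varphi_n)$, this reads $h_i(s_{k_n}) = h_i(t_{k_n})$ for every $i \leq n$. A single application of the lemma per $n$ thus produces a pool of indices $\{k_n : n \in \N\}$ along which more and more morphisms from the enumeration $h_0, h_1, \ldots$ of $\MorphV$ agree.

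Next I would use the monotonicity of $\phi$: the condition $\phi(n,h) \leq \phi(m,h)$ for $n \leq m$, combined with the injectivity of $\phi$ restricted to $\N \times \{h\}$, forces $n \mapsto \phi(n,h)$ to be strictly increasing for every fixed $h \in \MorphV$, and hence $\phi(n,h) \geq n$. In particular $k_n \geq n$, so $(k_n)_{n \in \N}$ is unbounded in $\N$. Standard bookkeeping then lets me pick $n_1 < n_2 < \cdots$ with $k_{n_1} < k_{n_2} < \cdots$, and I set $m_j := k_{n_j}$; the sequences $(s_{m_j})_{j \in \N}$ and $(t_{m_j})_{j \in \N}$ are then genuine subsequences of the original ones.

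By construction $h_i(s_{m_j}) = h_i(t_{m_j})$ holds whenever $i \leq n_j$, and since $n_j \to \infty$ (with $n_j \geq j$) the equality holds for each fixed $i$ at every sufficiently large $j$, giving the claimed agreement along the extracted subsequence. The only real subtlety is the bookkeeping required to turn the pool $\{k_n\}$ into a strictly increasing sequence $(m_j)$, and the monotonicity condition on $\phi$ is precisely what enables this; beyond that, the argument is unfolding definitions and applying the previous lemma.
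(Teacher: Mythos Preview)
Your approach is essentially the paper's: the paper's entire proof is the single line ``set $m_N := \phi(N,\varphi_N)$ with $\varphi_N$ from the previous lemma,'' which is exactly your $k_n$. You simply add the bookkeeping (unboundedness of $k_n$ via the monotonicity of $\phi$, then extraction of a strictly increasing sequence of indices) that the paper leaves implicit.
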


\begin{proof}
For $N \in \N$ define $\varphi_N$ as in the previous lemma. Set $m_N := \phi(N,\varphi_N)$.
\end{proof}

For a word $w \in A^*$ and $P \subseteq \N$ with $\{(0,\left|w\right|-1),(1,\left|w\right|-2),\ldots,(\left|w\right|-1,0)\} \cap P = \{p_1,\ldots,p_k\}$ where $p^1_1 < p^1_2 < \ldots < p^1_k$ define
$$w[P] = w_{p_1^1}\ldots w_{p_k^1}$$
and
$$ P[w] = p_1\ldots p_k \in (\N^2)^*.$$

\begin{lemma}\label{lemma:exPhn}
Let $L$ be a language of $A$ satisfying all the equations $\mathcal{E}_{[u = v]}$. Let $\theta$ be an equivalence relation of finite index contained in $R_L$ and let $P$ be an infinite equivalence class of $\theta$. Then there exists an $n \in \N$ and a morphism $h: A^* \rightarrow M$ into a finite monoid $M \in \VMon$
such that for all $s,t \in A^*$, if
\begin{enumerate}
\item $n \leq \left| s \right| = \left| t \right|$,
\item $s_i = t_i$ for all $i \notin P$
\item $h(s[P]) = h(t[P])$
\end{enumerate}
then
$ s \in L \Leftrightarrow t \in L$.
\end{lemma}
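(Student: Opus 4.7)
The plan is to proceed by contradiction, in a style parallel to Lemma~\ref{lemma:finIndex}. Suppose no such $n$ and $h$ exist. Then for every $n \in \N$ and every $h \in \MorphV$ there exist $s_{n,h}, t_{n,h} \in A^*$ satisfying the three numbered hypotheses of the lemma but with exactly one of them lying in $L$; by symmetry we may assume throughout that $s_{n,h} \in L$ and $t_{n,h} \notin L$.

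Reindex the witnesses via the bijection $\phi \colon \N \times \MorphV \to \N$ from Corollary~\ref{corollary:subSeq} and apply that corollary to $(s_N[P])_{N \in \N}$ and $(t_N[P])_{N \in \N}$ to obtain a subsequence $(m_n)_n$ with $h_i(s_{m_n}[P]) = h_i(t_{m_n}[P])$ for all $i \leq m_n$. By compactness of $\mathcal S(\Reg)$, after passing to a further subsequence we may assume $s_{m_n}[P] \to u$ and $t_{m_n}[P] \to v$ in $\mathcal S(\Reg)$. Since every morphism $h \in \MorphV$ satisfies $h(s_{m_n}[P]) = h(t_{m_n}[P])$ for all sufficiently large $n$, we obtain $\hat h(u) = \hat h(v)$; hence $[u = v]$ is a profinite identity of $\mathcal V$, so the family $\mathcal E_{[u=v]}$ lies among the equations defining $\BlockPA$ and is therefore satisfied by $L$.

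To derive a contradiction we construct $\gamma_u, \gamma_v \in \beta(A^* \times A^* \times (\N^2)^*)$ satisfying conditions~(\ref{cond1})--(\ref{cond3}) of Theorem~\ref{thm:main} while placing $f^{-1}(L)$ into $\gamma_u$ and $f^{-1}(L^c)$ into $\gamma_v$. Start with the filter base
$$\mathcal F_0 = \Big\{\bigcup_{i=1}^k A^* \times A^* \times P_i^* \;\Big|\; \{P_1,\ldots,P_k\} \text{ is a finite partition of } \N^2\Big\},$$
which by Theorem~\ref{theorem:contPos} forces condition~(\ref{cond3}). Enlarge it by $\lambda^{-1}(\mu)$ for a common ultrafilter limit $\mu$ of the $\lambda$-images of the counterexample triples (forcing~(\ref{cond1})) and by $\pi_2^{-1}(\tilde u), \pi_2^{-1}(\tilde v)$ for ultrafilters on $A^*$ extending $u$ and $v$ (forcing~(\ref{cond2})); finally, append $f^{-1}(L)$ to obtain $\mathcal F_u$ and $f^{-1}(L^c)$ to obtain $\mathcal F_v$. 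Extending each to an ultrafilter via Zorn's Lemma yields $\gamma_u, \gamma_v$ with $L \in \beta f(\gamma_u)$ but $L \notin \beta f(\gamma_v)$, contradicting $[\beta f(\gamma_u) \leftrightarrow \beta f(\gamma_v)] \in \mathcal E_{[u=v]}$.

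The main obstacle will be verifying that $\mathcal F_u$ and $\mathcal F_v$ really are filter bases: for every finite partition $\{P_1,\ldots,P_k\}$ of $\N^2$ together with the other finitely many listed constraints, we must exhibit a counterexample triple whose entire position vector lies in a single $P_i$. Two structural features are essential here: since $\theta \subseteq R_L$ and $P$ is a single $\theta$-class, letters at positions of $P$ may be freely permuted within $P$ without altering $L$-membership; and since $\mathcal V$ is commutative, any $h \in \MorphV$ factors through a commutative monoid, so such permutations preserve $h(s[P])$. Combining these invariances with pigeonhole applied to the finitely many $P$-positions available in a word of length $m_n$, we can rearrange the witnesses $(s_{m_n}, t_{m_n})$ to concentrate all their discrepancies in a single class $P \cap P_i$, and then restrict the substitution triple to those positions to furnish the required filter-base element.
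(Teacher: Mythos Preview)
Your overall architecture matches the paper's: argue by contraposition, reindex, pass to subsequences via Corollary~\ref{corollary:subSeq}, extract profinite limits $u,v$ by compactness, verify $[u=v]$ holds on $\mathcal V$, and then build $\gamma_u,\gamma_v$ violating $\mathcal E_{[u=v]}$. The first two paragraphs are essentially what the paper does.

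The divergence, and the gap, is in how you assemble the filter bases. The paper does not start from your generic $\mathcal F_0$ and then struggle to place a witness in a single block of an arbitrary partition. Instead it exploits the fact that the natural witness triples already live over the single set $P$: define
\[
T_s=\{(s_n,\,s_n[P],\,P[s_n])\mid n\in\N\},\qquad
T_t=\{(s_n,\,t_n[P],\,P[s_n])\mid n\in\N\}.
\]
Because $s_n$ and $t_n$ agree outside $P$, one has $f(T_s)\subseteq L$ and $f(T_t)\subseteq L^c$, and the third components lie in $P^*$ by construction. Now simply choose $\alpha\in\beta(\N^2)$ with $P\in\alpha$, pull back along $\contPos$, add $\lambda(T_s)=\lambda(T_t)$ (Lemma~\ref{lemma:addPullback} applies since the $\contPos$-image sits inside $P\in\alpha$), extend to $\mu$, pull back along $\lambda$, and finally adjoin $\pi_2^{-1}(u)\cup\{T_s\}$ (resp.\ $\pi_2^{-1}(v)\cup\{T_t\}$). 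No partition-by-partition verification, no permutation, no pigeonhole.

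Your proposed ``main obstacle'' argument, by contrast, does not close. When you restrict the substitution triple to positions in $P\cap P_i$, the second component becomes $s_{m_n}[P\cap P_i]$ rather than $s_{m_n}[P]$. You then lose the link to the profinite limit $u$: there is no reason the restricted subwords lie in every regular language from $u$, so condition~(\ref{cond2}) is no longer forced and $\pi_2^{-1}(\tilde u)$ need not meet your candidate triple. The permutation invariance from $\theta\subseteq R_L$ lets you move letters within $P$, but it does not let you shrink the position vector while preserving the $\pi_2$-constraint. The fix is exactly the paper's move: keep the full $P[s_n]$ as the position vector and bake $P$ into the ultrafilter $\alpha$ on $\N^2$ from the start, rather than trying to accommodate every finite partition after the fact.
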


\begin{proof}
By contraposition. Suppose that for every $n \in \N$ and morphism $h: A^* \rightarrow M$ into a finite monoid there exist two words $s(n,h)$ and $t(n,h)$ such that $(1)-(3)$ hold and $s(n,h) \in L$, but $t(n,h) \notin L$. 

Recall that $\phi: \N \times \MorphV \rightarrow \N$ is a bijection. Let $s_n = s(\phi^{-1}(n))$ and $t_n = t(\phi^{-1}(n))$. Considering the sequences $(s_n[P])_{n \in \N}$ and $(t_n[P])_{n \in \N}$, condition $(3)$ provides us with $h(s(n,h)[P]) = h(t(n,h)[P])$.

Then by Corollary $\ref{corollary:subSeq}$ there exist subsequences $(s_{m_n}[P])_{n \in \N}$ and $(t_{m_n}[P])_{n \in \N}$ such that for all $i \leq m_n$:
$h_i(s_{m_n}[P]) = h_i(t_{m_n}[P]).$

As $A^*$ can be embedded into $\Ahat$, both $(s_{m_n}[P])_{n \in \N}$ and $(t_{m_n}[P])_{n \in \N}$ define sequences in the free profinite monoid. Since this space is compact, every sequence has a convergent subsequence, hence there exists a set $J \subseteq \{m_n \mid n \in \N\}$ such that $(s_j[P])_{j \in J}$ converges and a set $I \subseteq J$ such that both $(s_i[P])_{i \in I}$ and $(t_i[P])_{i \in I}$ converge.

Define $u := \lim_{i \in I} s_i[P]$ and $v := \lim_{i \in I} t_i[P]$. We claim that $u \equiv_{\mathbf{V}} v$, so $\hat{h}(u) = \hat{h}(v)$ for every morphism $h: A^* \rightarrow M$ into a monoid $M \in \VMon$, where $\hat{h}$ denotes its unique continuous extension to $\Ahat$.

Let $h \in \MorphV$, then there exists an $i_0 \in \N$ such that $h = h_{i_0}$. Hence for all $i > i_0$ we have $h(s_i[P]) = h(t_i[P])$, which implies $\hat{h}(u) = \hat{h}(v)$, since $\hat{h}$ is continuous.
Define
$ T_s = \{(s_n,s_n[P],P[s_n]) \mid n \in \N\} $
and
$ T_t = \{(s_n,t_n[P],P[s_n]) \mid n \in \N\} $
By (2), for $n \in \N$ we obtain $f(s_n,s_n[P],P[s_n]) = s_n \in L$ and $f(s_n,t_n[P],P[s_n]) = t_n \notin L$
and thus
$f(T_s) \subseteq L \text{ and } f(T_t) \subseteq L^c.$
We claim that there exist two ultrafilters $\gamma_u$ and $\gamma_v$ satisfying
\begin{enumerate}
\item $\beta\lambda(\gamma_u) = \beta\lambda(\gamma_v)$
\item $u \subseteq \beta \pSub(\gamma_u) \text{ and } v \subseteq \beta \pSub(\gamma_v)$
\item $\beta \contPos(\gamma_u) = \beta \contPos(\gamma_v) \in \beta(\N^2)$
\end{enumerate}
such that
$L \in \beta f(\gamma_u)$ and $L^c \in \beta f(\gamma_v)$.
In order to ensure that all three conditions hold, we will use the technique already applied in Lemma \ref{lemma:finIndex}. That is by starting with one ultrafilter and using pullback while subsequently adding sets to the resulting filterbases, that yield the desired properties.

To ensure property 3., let $\alpha \in \beta(\N^2)$ with $P \in \alpha$. As a reminder, we denoted both the projections from $A^* \times A^* \times (\N^2)^*$ and $A^* \times \N \times (\N^2)^*$ onto the content of the third component by $\contPos$. Thus the pullback of $\alpha$ by $\contPos$ provides us with a filterbase on $A^* \times \N \times (\N^2)^*$.
\[
A^* \times \N \times (\N^2)^* \xhookleftarrow{\contPos^{-1}(\alpha)} \N^2
\]
Furthermore we have that, $\lambda(T_s) = \lambda(T_t)$, which implies that $\contPos(T_s) = \contPos(T_t) = P$. Thus, adding the set $\{\lambda(T_s)\}$ to $\contPos^{-1}(\alpha)$ still yields a filterbase by Lemma \ref{lemma:addPullback}. Let $\mu \in \beta(A^* \times \N \times (\N^2)^*)$ containing the extended filter base. Recall that the mapping $\contPos$ factors through $\lambda$ by $\contPos \circ \lambda = \contPos$ and thus $\contPos^{-1}(\alpha) \subseteq \lambda^{-1}(\mu)$. This ensures that any ultrafilter $\gamma$ containing $\lambda^{-1}(\mu)$ will satisfy $\beta\lambda(\gamma) = \mu$ and $\beta\contPos(\gamma) = \alpha$.
\[
A^* \times A^* \times (\N^2)^* \xhookleftarrow{\lambda^{-1}(\mu)} A^* \times \N \times (\N^2)^* \xhookleftarrow{\contPos^{-1}(\alpha)} \N^2
\]
Since $\lambda(T_s) \in \mu$, the sets
$\mathcal{F}_u = \lambda^{-1}(\mu) \cup \pSub^{-1}(u) \cup \{T_s\}$
and
$\mathcal{F}_v = \lambda^{-1}(\mu) \cup \pSub^{-1}(v) \cup \{T_t\}$
are both filterbases.

Any ultrafilter $\gamma_u$ containing $\mathcal{F}_u$ and $\gamma_v$ containing $\mathcal{F}_v$ will satisfy $1. - 3.$.

Let $\gamma_u$ and $\gamma_v$ be two such ultrafilters. Then $T_s \in \gamma_u$ and since $f(T_s) \subseteq L$, we obtain $T_s \subseteq f^{-1}(L)$ and thus $f^{-1}(L) \in \gamma_u$ and by $T_t \subseteq L^c$, $f^{-1}(L^c) \in \gamma_v$. Thus
$L \in \beta f (\gamma_u) \text{ and } L \notin \beta f(\gamma_v).$
By contraposition, the claim holds.
\end{proof}

\begin{lemma}\label{lemma:exhn}
Let $L$ be a language of $A^*$ satisfying all the equations $\mathcal{E}_{[u=v]}$ and let $\theta$ be an equivalence class of finite index contained in $R_L$. Then there exists an $n \in \N$ and a morphism $h: A^* \rightarrow M$ into a finite monoid $M \in \VMon$ such that for all $s,t \in A^*$, if $n \leq \left| s \right| = \left| t \right|$ and
\[ h(s[P]) = h(t[P]) \text{ for each $\theta$ equivalence class $P$,}\]
then
$s \in L \Leftrightarrow t \in L$.
\end{lemma}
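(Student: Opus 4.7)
The plan is to combine Lemma \ref{lemma:exPhn}, applied separately to each infinite equivalence class of $\theta$, by taking a product morphism. Since $\theta$ has finite index, we can list its infinite classes as $P_1,\ldots,P_k$ and its finite classes as $Q_1,\ldots,Q_\ell$. For each infinite $P_j$, Lemma \ref{lemma:exPhn} supplies a threshold $n_j$ and a morphism $h_j\colon A^*\to M_j$ with $M_j\in\mathbf V$.

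Next I would define $h=(h_1,\ldots,h_k)\colon A^*\to\prod_{j=1}^k M_j$; the codomain lies in $\mathbf V$ because $\mathbf V$ is closed under finite products. For the threshold, set
\[ N=\max_{i\le\ell}\max\{a+b+1\mid (a,b)\in Q_i\}, \qquad n=\max(n_1,\ldots,n_k,N+1). \]
The point of the second term is that for a word $w$ of length $n'\ge n$, the diagonal of positions $\{(i,n'-1-i)\mid 0\le i<n'\}$ realized in $w$ is disjoint from every $Q_i$, so $s[Q_i]=t[Q_i]=\emptyW$ whenever $|s|=|t|\ge n$ and the hypothesis on finite classes is automatic. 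Consequently the diagonal is covered entirely by the infinite classes $P_1,\ldots,P_k$.

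To finish, given $s,t$ with $|s|=|t|\ge n$ and $h(s[P])=h(t[P])$ for every class $P$, I would interpolate. Writing $\mathrm{pos}_{n'}(R)=\{i\mid (i,n'-1-i)\in R\}$, define $w_0=s$ and, for $j=1,\ldots,k$,
\[
(w_j)_i=\begin{cases} t_i & \text{if } i\in\mathrm{pos}_{n'}(P_1\cup\cdots\cup P_j),\\ s_i & \text{otherwise.}\end{cases}
\]
Because the infinite classes cover the whole diagonal, $w_k=t$. Consecutive words $w_{j-1}$ and $w_j$ agree off $\mathrm{pos}_{n'}(P_j)$ since the $P_j$ are pairwise disjoint; on that set $w_{j-1}$ coincides with $s$ and $w_j$ with $t$, so $w_{j-1}[P_j]=s[P_j]$ and $w_j[P_j]=t[P_j]$. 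Projecting the hypothesis $h(s[P_j])=h(t[P_j])$ to the $j$-th component yields $h_j(w_{j-1}[P_j])=h_j(w_j[P_j])$, and Lemma \ref{lemma:exPhn} applied to $P_j$ gives $w_{j-1}\in L\Leftrightarrow w_j\in L$. Chaining these equivalences gives $s\in L\Leftrightarrow t\in L$.

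The only real subtlety in this plan is the treatment of finite equivalence classes; everything else is bookkeeping. Choosing $n>N$ forces the finite classes to be ``invisible'' at the relevant lengths, which is what makes the one-class-at-a-time argument of Lemma \ref{lemma:exPhn} suffice to handle the whole partition.
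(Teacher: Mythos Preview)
Your proof is correct and follows essentially the same strategy as the paper: apply Lemma \ref{lemma:exPhn} to each infinite $\theta$-class, form the product morphism into $\prod_j M_j\in\mathbf V$, and interpolate from $s$ to $t$ by switching one class at a time. The only real difference is how the finite classes are dealt with. The paper appeals to Corollary \ref{corollary:finIndex} to replace $\theta$ by a refinement in which every finite class is a singleton (which sits somewhat awkwardly with the fact that $\theta$ is already given in the hypothesis), and then relies implicitly on those singletons not appearing on the diagonal for long enough words. You make this explicit by enlarging the threshold $n$ past $N=\max\{a+b+1\mid (a,b)\in\bigcup_i Q_i\}$, so that the finite classes are disjoint from the diagonal of any word of length $\ge n$ and the infinite classes cover all positions. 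This is a cleaner bookkeeping choice, but the underlying argument is the same.
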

\begin{proof}
Let $L$ satisfy the equations $\mathcal{E}_{[ab = ba]}$. Then, by Corollary \ref{corollary:finIndex}, $R_L$ contains an equivalence relation of finite index $\theta$ for which each finite equivalence class is a singleton. Let $P_1,\ldots,P_r$ be the equivalence classes of $\theta$. For each $i \in \{1,\ldots,r\}$ with $P_i$ infinite, we define $n_i$ and $h_i$ as in Lemma \ref{lemma:exPhn}. Furthermore define
$ n = \max\{n_i \mid P_i \text{ is infinite }\}$
and
$ h(u) = (h_1(u),\ldots,h_r(u)).$
Again, $h$ is a morphism into a monoid of $\VMon$, since $\VMon$ is closed under finite products.
Now let $u,v \in A^*$, with $n \leq \left| u \right| = \left| v \right|$ and $h(u[P]) = h(v[P])$ for each $\theta$ equivalence class $P$. We define words $w_i \in A^*$ for $i = 0,\ldots,n$ and $j = 0,\ldots, \left|u \right|$ by
$$
(w_i)_j = 
\begin{cases}
u_j & \text{if $j \in P_k$ and $i<k$}\\
v_j & \text{otherwise.}
\end{cases}
$$

By construction we have $w_0 = u$, $w_n = v$ and Lemma \ref{lemma:exPhn} applies to each pair $w_{i-1},w_i$ with $i \in \{1,\ldots,n\}$ and thus
$ w_{i-1} \in L \Leftrightarrow w_i \in L .$
It follows that
$ u \in L \Leftrightarrow v \in L .$
\end{proof}

For $N \in \N$ denote by $A^{\geq N}$ the set of all words of length greater or equal to $N$, that is
$A^{\geq N} = \{w \in A^* \mid \left|w \right| \geq N\}$

\begin{theorem}
If $L \in \mathcal{P}(A^*)$ satisfies all the equations $\mathcal{E}_{[u = v]}$, then $L \in \BlockPA$.
\end{theorem}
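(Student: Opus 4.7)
The plan is to distil from $L$ a finite morphic profile and package it as a language over a partition-refined alphabet whose $\tau_{\mathcal D}$-preimage is $L$.

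First I would exploit the commutativity of $\mathcal V$: since $[ab = ba]$ is an identity of $\mathcal V$, Corollary~\ref{corollary:finIndex} yields an equivalence relation $\theta$ of finite index on $\N^2$, contained in $R_L$, whose finite classes are singletons. Feeding $\theta$ into Lemma~\ref{lemma:exhn} produces a threshold $n \in \N$ and a morphism $h \colon A^* \to M$ with $M \in \mathbf{V}$ such that, for any two words $s,t$ of the same length $\geq n$ with $h(s[P]) = h(t[P])$ on each $\theta$-class $P$, one has $s \in L \Leftrightarrow t \in L$. Consequently, for sufficiently long $s$, membership in $L$ is determined by the tuple $(h(s[P_1]), \ldots, h(s[P_r]))$, where $P_1, \ldots, P_r$ are the finitely many infinite $\theta$-classes.

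Next, I would let $\mathcal D$ be the partition of $\N^2$ whose blocks are the $\theta$-classes, and for each $k$ define the morphism $\pi_k \colon (A\times \mathcal D)^* \to A^*$ by $(a,P_k) \mapsto a$ and $(a,P_j) \mapsto \emptyW$ for $j \neq k$. For each $\vec m \in M^r$ the cylinder
\[ C_{\vec m} = \bigcap_{k=1}^{r} \pi_k^{-1}(h^{-1}(m_k)) \]
lies in $\mathcal V_{A \times \mathcal D}$ by the Eilenberg correspondence (which places $h^{-1}(m_k)$ in $\mathcal V_A$) and closure of $\mathcal V$ under inverse morphisms. Using the observation $\tau_{\mathcal D}(w)[P_k] = w[P_k]$, a suitable Boolean combination $L'$ of the $C_{\vec m}$ should satisfy $\tau_{\mathcal D}^{-1}(L') \cap A^{\geq N} = L \cap A^{\geq N}$ for $N$ chosen large enough to avoid the finitely many diagonals carrying singleton $\theta$-classes. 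The remaining exceptional short words (those of length $<N$ or resting on a singleton diagonal) form a finite set and can be patched into $L'$ by refining $\mathcal D$ to isolate each relevant position, staying inside $\BlockPA$.

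The delicate step, and the one I expect to be the main obstacle, is pinning down which tuples $\vec m$ to include uniformly across all lengths: Lemma~\ref{lemma:exhn} only equates words of the same length, so the winning set $W_\ell \subseteq M^r$ could a priori depend on $\ell$. I would argue that this residual length-dependence is itself captured inside $\mathcal V$, since the multiplicity of each letter $(a,P_k)$ in $\tau_{\mathcal D}(w)$ is determined by $|w|$ through the shape of $P_k$ on the diagonal, and the further identities $[u=v]$ of $\mathcal V$ propagate along those multiplicities. Concretely, supposing that no language of $\mathcal V_{A\times \mathcal D}$ realises the family $\{W_\ell\}_\ell$, I would build, via a compactness/ultrafilter argument modelled on Lemma~\ref{lemma:exPhn}, profinite words $u,v$ with $[u=v]$ an identity of $\mathcal V$ and ultrafilters $\gammaU, \gammaV$ satisfying the three conditions of Theorem~\ref{thm:main} while witnessing the failure of $[\beta f(\gammaU) \leftrightarrow \beta f(\gammaV)]$ on $L$, contradicting the hypothesis. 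This final compactness construction, upgrading a same-length equivalence to a length-uniform one, is where the bulk of the technical work will lie.
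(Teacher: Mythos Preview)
Your overall architecture coincides with the paper's: invoke Corollary~\ref{corollary:finIndex} and Lemma~\ref{lemma:exhn}, show that each cylinder $L_{P,m} = \{w : h(w[P]) = m\}$ lies in $\BlockPA$ via an erasing morphism $(A \times \mathcal D)^* \to A^*$, and write $L \cap A^{\geq N}$ as a finite Boolean combination of such cylinders, patching the remaining short words as a finite set. The paper does exactly this, in only slightly different notation.

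You have in fact noticed a wrinkle the paper's write-up passes over in silence: Lemma~\ref{lemma:exhn} only relates words of \emph{equal} length, so the winning set $W_\ell \subseteq M^r$ may a priori depend on $\ell$, whereas the paper simply asserts a length-independent dichotomy $L_m \subseteq L$ or $L_m \subseteq L^c$. Your instinct that something must be said here is correct, but the remedy you outline --- keeping $\mathcal D$ fixed and mounting a fresh compactness/ultrafilter construction to produce a further identity $[u=v]$ of $\mathcal V$ that $L$ would violate --- is aimed at the wrong factor of the block product and would not close the gap. The residual length-dependence is absorbed by $\Parb$, not by $\mathcal V$: distinct word lengths sit on distinct diagonals of $\N^2$, and since $\mathcal D$ may be \emph{any} finite partition, every length language $\{w : |w| \in S\}$ already lies in $\BlockPA$ (for nontrivial $\mathcal V$; the trivial case is vacuous). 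As $W_\ell$ ranges over at most $2^{|M|^r}$ values, stratify $\{\ell \geq N\}$ into finitely many length classes according to the value of $W_\ell$ and intersect each Boolean combination of cylinders with the corresponding length stratum. What you flag as the ``main obstacle'' thus dissolves into a one-line appeal to the non-uniformity built into $\Parb$; no additional equations are needed.
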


\begin{proof}
Let $h \colon A^* \rightarrow M$ be a morphism into a monoid of $\VMon$. For $P \subseteq \N^2$ and $m \in M$ define the set
$ L_{P,m} = \{w \in A^* \mid h(w[P]) = m\}.$
Since $M \in \VMon$, the language $R = h^{-1}(m)$ is an element of $\mathcal V_{A}$. Let $\mathcal D = \{P,P^c\}$ be a partition of $\N^2$. Define the morphism
\begin{align*}
e_{\mathcal D} \colon (A \times \mathcal D)^* &\rightarrow A^*\\
(w,P) & \mapsto w\\
(w,P^c) & \mapsto \epsilon
\end{align*}
Since $\mathcal V_{A}$ is closed under inverse morphisms, $e^{-1}_{\mathcal D}(h^{-1}(m)) \in \mathcal V_{A \times \mathcal D}$.
Then
$$L_{P,m} = \{w \in A^* \mid \tau_{\mathcal D} \in e^{-1}_{\mathcal D}(h^{-1}(m))\}$$
is an element of $\BlockPA$.
By Corollary \ref{corollary:finIndex}, the relation $R_L$ contains an equivalence relation $\theta$ of finite index for which each finite equivalence class is a singleton. Let $P_1,\ldots,P_r$ be the corresponding partition of $\N$. By Lemma \ref{lemma:exhn}, there exists an $N \in \N$ and a morphism $h \colon A^* \rightarrow M$ into a monoid $M \in \VMon$ such that for $m \in M$ and
$$L_m = \left( \bigcap_{i=1}^r L_{P_i,m} \right) \cap A^{\geq N}$$
either $L_m \subseteq L$ or $L_m \subseteq L^c$.

This implies that there exists some $Q \subseteq M$ such that
$L \cap A^{\geq N} = \bigcup_{m \in Q} L_m.$
Since $\BlockPA$ contains all finite languages, $L$ is a Boolean combination of languages in $\BlockPA$.
\end{proof}

\section{Conclusion}\label{sec:conclusion}

We have presented a method applicable to arbitrary classes of languages, to describe circuit classes by equations. The tools and techniques used originate from algebra and topology and have previously been used on regular language classes. Due to recent developments in generalizing these methods to non-regular classes, they are now powerful enough to describe circuit classes. But the knowledge that they are powerful enough itself is not sufficient, as we require a constructive mechanism behind these descriptions. Since non-uniform circuit classes are by definition not finitely presentable, this seemed to be impossible.

Nevertheless, we were able to find a description of small but natural circuit classes via equations. This description seems helpful as it easily allows to prove non-membership of a language to some circuit class. Another advantage is the possibility of using Zorn's Lemma for the extension of filter bases to ultrafilters, which prevents us from having to use probabilistic arguments in many places. Also in Lemma \ref{lemma:exPhn} we use purely topological arguments of convergence, for which it is unclear how this could be achieved purely combinatorially.

The results we acquired are not so different from the results about equations for varieties of regular languages by Almeida and Weil \cite{AlWe98}. This gives hope that their results can be used as a roadmap for further research. 

In \cite{WBP1} it was shown that a certain restricted version of the block product of our constant size circuit classes would actually yield linear size circuit classes (over the same base). Here having equations for all languages captured by this circuit class, not just the regular ones, would pay off greatly. By showing that a padded version of a language is not in a linear circuit class  we could already prove that PARITY is not in a polynomial size circuit class.
Equations for non-regular language classes could be used to overcome previous bounds. The separation results in the corollary can easily be extended to show that a padded version of those languages is not contained in these circuit classes.

A different approach would be to examine the way the block product was used here. The evaluation of a circuit is equivalent to a program over finite monoids. While the program itself has little computational power, it allows non-uniform operations like our $\N$-transducers. The finite monoid itself corresponds loosely speaking to the computational power of the gates of the circuit, which was handled by our variety $\mathcal V$. For general circuit classes one would need to consider larger varieties containing also non-commutative monoids. While the methods here seem to be extendable to non-commutative varieties, the more complicating problem remaining is to find an extension of the block product that corresponds to polynomial programs over these monoids.

\bibliography{literature}
\bibliographystyle{plain}

\end{document}